\def\blue{\color{blue}}
\def\be{\begin{equation}}
\def\ee{\end{equation}}
\def\bea{\begin{eqnarray}}
\def\eea{\end{eqnarray}}
\def\bz{\bar z}
\def\bs{\bar s}
\def\p{\partial}
\def\bp{\bar\partial}
\def\bs{\bar s}
\def \CO{\mathcal O}
\def\bcal_k{\mathcal B_k}
\newcommand{\szego}{Szeg\H{o}\ }
\newcommand{\kahler}{K\"ahler }
\newcommand{\om}{\omega}
\newcommand{\ov}{\overline}
\newcommand{\cC}{\mathcal{C}}
\def\tr{{\rm Tr\,}}
\DeclareMathOperator{\ric}{Ric}
\DeclareMathOperator{\End}{End}
\newcommand{\boldsym}[1]{\boldsymbol{#1}}
\newcommand\bb{\boldsym{b}}
\newcommand{\field}[1]{\mathbb{#1}}
\newcommand{\Z}{\field{Z}}
\newcommand{\R}{\field{R}}
\newcommand{\C}{\field{C}}
\newtheorem{thm}{{\sc Theorem}}
\newtheorem{maindefn}{{\sc Definition}}
\newtheorem{mainlem}{{\sc Lemma}}
\newenvironment{rem}
{\medskip\noindent{\it Remarks:\/} }{\medskip}
\newtheorem*{mycor}{Corollary}
\newtheoremstyle{dotless}{}{}{\itshape}{}{\bfseries}{}{ }{}
\theoremstyle{dotless}
\newtheorem*{mythm}{Theorem}
\begin{document}

\title[Quantum Hall effect and Quillen metric]
{Quantum Hall effect and Quillen metric}
\author[Semyon Klevtsov, Xiaonan Ma,
George Marinescu and Paul Wiegmann]
{Semyon Klevtsov$^1$, Xiaonan Ma$^2$,
George Marinescu$^{3}$ and Paul Wiegmann$^4$}

\maketitle

\begin{center}
{\small
\address{\it$^1$Universit{\"a}t zu K{\"o}ln,
Mathematisches Institut, Weyertal 86-90, 50931 K{\"o}ln, Germany}

\vspace{.15cm}
\address{\it$^2$Universit\'e Paris Diderot - Paris 7,
UFR de Math\'ematiques, Case 7012,\\
75205 Paris  Cedex 13, France}

\vspace{.15cm}
\address{\it$^3$Universit{\"a}t zu K{\"o}ln,
Mathematisches Institut, Weyertal 86-90, 50931 K{\"o}ln, Germany\\
\&Institute of Mathematics `Simion Stoilow', Romanian Academy,
Bucharest, Romania}

\vspace{.15cm}
\address{\it$^4$Department of Physics, University of Chicago,
929 57th St, Chicago, IL 60637, USA}
}

\vspace{.3cm}
\email{\tt\footnotesize sam.klevtsov@gmail.com,
xiaonan.ma@imj-prg.fr,\\ gmarines@math.uni-koeln.de,
wiegmann@uchicago.edu}
\end{center}

\vspace{.5cm}
\begin{abstract} We study the generating functional,
the adiabatic curvature and the adiabatic phase for the integer
quantum Hall effect (QHE) on a compact Riemann surface.
For the generating functional we derive its asymptotic expansion
for the large flux of the magnetic field, i.e., for the large degree $k$
of the positive Hermitian line bundle $L^k$. The expansion consists
of the anomalous and exact terms. The anomalous terms are the
leading terms of the expansion. This part  is responsible for the
quantization of the adiabatic transport coefficients in QHE.
We then identify the non-local (anomalous) part of the expansion 
with the Quillen metric on the determinant line bundle, and the
subleading exact part with the asymptotics of the regularized
spectral determinant of the Laplacian for the line bundle $L^k$,
at large $k$. Finally, we show how the generating functional of
the integer QHE is related to the gauge and  gravitational (2+1)d
Chern-Simons functionals. We observe the relation between
the Bismut-Gillet-Soul\'e curvature formula for the Quillen metric
and the adiabatic curvature for the  electromagnetic and geometric
adiabatic transport of the integer Quantum Hall state.
Then we relate the adiabatic phase in QHE to the eta invariant and show that the geometric part of the adiabatic phase is given by the Chern-Simons functional.
\end{abstract}

\maketitle

\thispagestyle{empty}
\tableofcontents

\section{Introduction}


Quantum Hall effect (QHE) arises as a result of a collective motion
of electronic liquid confined to the plane in a quantizing magnetic
field.  Over the last thirty years, several mathematical models of
QHE have been developed. In early important papers
\cite{ASZ,ASZ1,L1,L2} a fundamental  relation between QHE
and the Quillen metric \cite{Q}  has been used to illuminate 
quantization of transport coefficients.
Since then this relation has not been sufficiently  explored. 
This paper intends to fill this gap.
We further develop the relation between Quillen metric and QHE,
and demonstrate that the anomaly formula for Quillen metric
(sometimes the name ``Quillen anomaly'' is used for brevity)
is the powerful approach to physics
and mathematics of QHE and is a particularly useful approach to
capture the universal
transport coefficients. The relation between the Quillen anomaly
and QHE also allows us to obtain the asymptotic expansion
of the spectral determinant of  the Laplacian in large magnetic field.

The adiabatic transport coefficients are perhaps
the most important observables of QHE.
They are precisely quantized on QH plateaus. These coefficients are
encoded by the adiabatic curvature 2-form, see e.g.,  \cite{AS}.
We will show that the adiabatic curvature 2-form  and through it
the quantized transport coefficients,
naturally follow from the formula \cite{BGS} (see also \cite{B,BF})
for the curvature of the
determinant line bundle.  Also we will show how the curvature
formula can be used in order
to obtain the geometric part of the adiabatic phase of the QHE in 
the form
of the gauge and gravitational Chern-Simons functional.
The geometric part of the adiabatic phase has been recently
computed in papers  \cite{AG1,AG2,AG4}, extending
the earlier approaches of \cite{FS} and \cite{WZ}. Our method in
this paper is based on generating functional
(see, e.g., \cite{ASZ,L1,L2,K,FK,CLW,CLW1,KW} for an incomplete
list of references). One of the outcomes of this article is
a clarification of the relation between the generating functional
QH-state approach, the Quillen anomaly and the Chern-Simons
functional. In this paper we limit our studies to the integer QHE.

\subsection{Background on QHE}\label{PIntr}
Quantum Hall effect occurs in certain two-dimensional electron 
systems subjected to a high magnetic field. If the current $J$ 
is forced through the sample in $x$-direction, the Hall voltage 
$V_H$ is observed in the perpendicular direction $y$ on the plane. 
The Hall conductance $\sigma_H=J/V_H$, measured as 
a function of magnetic field at a fixed chemical potential,
follows a series of transitions between the plateaus,
where it takes constant rational values $\sigma_H=\nu$ in units of
$e^2/h$ with a very high precision. 
The case where $\nu\in \mathbb Z$ is referred as the integer 
QHE and $\nu=p/q$ is referred to as the fractional QHE. 
The former case can be explained by a system of free 
(non-interacting) electrons and the latter is strongly interacting 
system (via Coulomb forces). While the mechanism behind the QHE
is completely different in the two cases, the quantization of 
the transport coefficients 
is explained by invoking their
relations to  certain Chern numbers. 
This paradigm is traced back to 
Ref. \cite{T1}  and was elaborated in more detail in
Refs.\ \cite{ASZ,ASZ1} and \cite{L1,L2}. Let us briefly recall 
the main points of their construction. 

The idea is to consider the QHE on a compact Riemann surface 
$\Sigma$, where the collective wave function $\Psi(z_1,...,z_N)$ 
describing the electrons naturally acquires dependence on 
the parameters associated with the geometry of $\Sigma$. 
To give an example, we need to recall that in quantum mechanics 
of particle in a magnetic field, the wave function is a section 
of an appropriate line bundle $\rm L$ on $\Sigma$, 
see e.g.\ \cite{DK} for a review. The flat component (called $L_\varphi$ in what follows) of the line bundle 
$\rm L$ is parameterized by coordinates 
$(\varphi_1^a,\varphi_2^a)\in T^{2\rm g}$ in the Jacobian torus
(see Eq.\ \eqref{flatcon} for more detail), which physically 
correspond to Aharonov-Bohm (AB) solenoid fluxes, 
piercing through the holes of the surface (for genus $\rm g>0$). 
The wave function $\Psi(z_1,...,z_N|\varphi^a)$ now depends on 
$\varphi^a$, which can be varied adiabatically with time $t$ 
and we can define adiabatic (Berry) connection 
\begin{equation}\label{aconn}
\mathcal A=i\langle \Psi, d_t\Psi\rangle_{L^2}
\end{equation} 
and curvature $\Omega=\langle d_t\Psi, d_t\Psi\rangle_{L^2}$, 
see Ref.\ \cite{Simon}. The current $J_a$ around the $a$th flux 
is equal $J_a=i\omega_{ab}\dot\varphi^b$, where $\omega_{ab}$
are coefficients of the conductance two-form $\omega$ on
$T^{2\rm g}$, computed via Quillen anomaly arguments in 
Ref. \cite{ASZ1},
\begin{equation}\label{sigmaH}
\omega=\Omega+dd^c\log\det \Delta_{\rm L},\quad
\Omega= \sigma_H\sum_{a=1}^{\rm g}
d\varphi_1^a\wedge d\varphi_2^{a},
\end{equation}
where $\Delta_{\rm L}$ is the Laplacian for the line bundle $\rm L$, defined 
in \eqref{DeltaL}. Thus the Hall conductance $\sigma_H$, 
which is the coefficient in the conductance  2-form $\Omega$
in \eqref{sigmaH} is quantized since $[\Omega]\in2\pi\mathbb Z$, 
and $dd^c\log\det \Delta_{\rm L}$ can be interpreted 
as the mesoscopic fluctuations around the quantized value of 
$\sigma_H$. Mesoscopic fluctuations vanish exponentially 
with the size of the system and its average over  the ensemble of 
systems with respect to remaining parameters identically vanishes.

In this paper  we focus on the adiabatic geometric transport on 
the moduli space of complex 
structures on $\Sigma$, which gives rise to novel quantized 
transport coefficients \cite{ASZ1,L1,L2,KW}. 
In this paper we explore this program further, 
along various lines. Before stating the results, we introduce 
the main objects of interest.

\subsection{Mathematical background}\label{M}

Consider a positive Hermitian holomorphic line bundle $L$ with
a Hermitian metric $h$,
and its $k$th tensor power $(L^k,h^k)$, on a compact (connected)
Riemann surface
$\Sigma$ of genus $\rm g$ and the space of global holomorphic
sections $H^0(\Sigma, L^k)$.
Physically, the basis of the vector space $H^0(\Sigma, L^k)$
corresponds to the states on the lowest Landau level (LLL).
We then  tensor the line bundle $(L^k,h^k)$ with the
$s$-th tensor power $K^s$ of the
canonical line bundle $K$ on $\Sigma$,
where $s\in\frac12\mathbb Z$ is called the spin.
Note that the canonical bundle $K$ on a Riemann surface
admits a square root, namely, there exists a
holomorphic line bundle $\mathcal{K}$ such that $\mathcal{K}^2=K$,
so that $K^s:=\mathcal{K}^{2s}$ makes sense for
$s\in\frac12\mathbb Z$.

For the line bundle $L^k$ the positive integer $k$ has a meaning
of total flux of the magnetic field  $\frac1{2\pi}\int_\Sigma F=k$,
where $-iF=F_{z\bz}dz\wedge d\bz$,
$F_{z\bz}=-\p_z\p_{\bz}\bigl(\log h^k \bigr)$,
is the curvature $(1,1)$-form of $L^k$
and we assume that $F_{z\bz}>0$ everywhere on $\Sigma$.
Writing the
Riemannian metric as
$ds^2=2g_{z\bz}|dz|^2$ in local complex coordinates $z,\bz$,
the Hermitian metric on $K^s$ is  $(g_{z\bz})^{-s}$.
Then $B=g^{z\bz}F_{z\bz}$ 
is the magnetic field.
The curvature form on $K^s$ is given by $si\,{\rm Ric}(g)$,
where the Ricci form is
${\rm Ric}(g)=-\p_z\p_{\bz}\bigl(\log g_{z\bz}\bigr)idz\wedge
d\bz$.
Thus $\deg K^s=-s\chi(\Sigma)$ and according to the Riemann-Roch
formula \eqref{RR1}
the dimension of the vector space $H^0(\Sigma,L^k\otimes K^s)$
of holomorphic sections (LLL states) equals
\begin{equation}\nonumber
N_k=\dim H^0(\Sigma,L^k\otimes K^s)=k+(1-{\rm g})(1-2s),
\end{equation}
assuming $\deg L^k\otimes K^s>\deg K$. Since the physics of
QHE involves
a large number of particles, we will always assume that the flux
of the magnetic field is large $k\gg1$,
while $s$ is kept fixed, so that the dimension $N_k$ is also large.

Consider now a basis $s_j(z),\,j=1,\ldots N_k$, of the space of
holomorphic sections
$H^0(\Sigma,L^k\otimes K^s)$.
The positive measure on configurations of $N_k$ electrons derived
from QH-state is defined as follows: 
\begin{maindefn}\label{L}
Consider the following finite measure on a configuration of $N_k$
points
\begin{equation}\label{meas}
|\Psi(z_1,\ldots,z_{N_k})|^2\prod_{j=1}^{N_k} \sqrt gd^2z_j
:= \frac1{N_k!}|\det s_i(z_j)|^2 \prod_{j=1}^{N_k} h^k(z_j,\bz_j)
g_{z\bz}^{-s}(z_j,\bz_j)\sqrt gd^2z_j\,.
\end{equation}
The partition function $Z_k$ is defined as
\begin{equation}\label{Z}
Z_k=\frac1{(2\pi)^{N_k}}\int_{\Sigma^{N_k}}
|\Psi(z_1,\ldots,z_{N_k})|^2\prod_{j=1}^{N_k} \sqrt gd^2z_j,
\end{equation}
{\blue } and $\log Z_k$ is called the generating functional.
\end{maindefn}
By construction $Z_k$ depends on  the choice of the
basis in $H^0(\Sigma,L^k\otimes K^s)$, which we specify in 
a moment, on the  Riemannian
metric $g$ on $\Sigma$,
the Hermitian metric $h^k$ on $L^k$, and is invariant
if we transform the defining data by a diffeomorphism. 
Also, $Z_k$ implicitly depends on the choice
of a complex structure $J$
on $\Sigma$ and a point in Jacobian variety. 
The latter is because once we have chosen the line bundle $L^k$, 
all other choices are related to by tensoring 
$L^k\otimes L_\varphi$ with a flat line bundle $L_\varphi$ 
(which is trivial as a smooth line bundle, but non-trivial as 
holomorphic line bundle).
Following standard notations we denote by $\mathcal M_{\rm g}$ of complex structures on the surface $\Sigma$ of genus $\rm g$ and by $\mathit{Jac}(\Sigma)$
the moduli space of flat line bundles on $\Sigma$. 
Then $Z_k$ varies over the parameter space (complex manifold),
\begin{equation}\label{YY}
Y=\mathcal{M}_{\rm g}\times \mathit{Jac}(\Sigma).
\end{equation}
Note that the complex structure on the fiber
$\mathit{Jac}(\Sigma_{b})$ depends on $b\in \mathcal{M}_{\rm g}$.
For each $y\in Y$ we denote by $\Sigma_y$ the Riemann surface
corresponding to
$y$, and we have a corresponding partition function $Z_k(y)$.
As we have already mentioned in \S \ref{PIntr}, physically 
the Jacobian variety $\mathit{Jac}(\Sigma)$
corresponds to the space of Aharonov-Bohm solenoid fluxes.

The dependence of generating functional on $y\in Y$ encodes
the adiabatic transport coefficients related to electromotive
($\mathit{Jac}(\Sigma)$) and geometric ($\mathcal M_{\rm g}$)
transport, as we now explain.
Consider the holomorphic line bundle
$\mathcal L=\det H^0(\Sigma, L^k\otimes K^s)$ over $Y$, called
determinant line bundle, with fiber
$\mathcal L_y=\det H^0(\Sigma_y, L_y^k\otimes K_y^s)$
for $y\in Y$. For each point $y_0\in Y$ there exist a neighborhood
$U\subset Y$ of $y_0$
and a basis $s_j=s_{j}(\cdot,y)$, $j=1,\ldots,N_k$, of
$H^0(\Sigma_y, L_y^k\otimes K_y^s)$
holomorphically varying with $y\in U$.
We obtain thus a local holomorphic frame
\begin{equation}\label{Phi1}
\mathcal S:U\to \mathcal L,\quad y\mapsto 
\mathcal S(y):=s_{1}(\cdot,y)\wedge\cdots
\wedge s_{N_{k}}(\cdot,y)\in\mathcal{L}_y\,.
\end{equation}
This is the holomorphic component of the state $\Psi$
in Eq.\  \eqref{meas}, which also varies holomorphically on
the parameter space $Y$. We define now $Z_{k}(y)$ Eq.\ \eqref{Z}
for $y\in U$ as the square of the $L^2$-norm 
of $\mathcal{S}(y)$ 
with help of $s_{j}(\cdot,y)$, by taking the pointwise norm
of $s_{1}(\cdot,y)\wedge\cdots\wedge s_{N_{k}}(\cdot,y)$
as a section of 
$(L_y^k\otimes K_y^s)\boxtimes\ldots
\boxtimes(L_y^k\otimes K_y^s)\to\Sigma^{N_k}_y$
and integrating over $\Sigma^{N_k}_y$. 
In Lemma 1 (Section \ref{sec5}) 
we show that the adiabatic
curvature $(1,1)$-form $\Omega$ on $Y$ is given by the curvature 
of the Chern connection on $\mathcal{L}$ endowed with 
the $L^2$-metric,
\begin{equation}\label{omega1}
\Omega=-\partial_Y \overline\partial_{Y} \log Z_{k},
\end{equation}
where $\overline\partial_{Y}$ is the Cauchy-Riemann operator
on $Y$ and $\partial_Y$ is defined by the decomposition
$d_Y=\partial_Y+\overline\partial_{Y}$ of the exterior derivative
$d_Y$ on $Y$.
In local complex coordinates $(y_j,\overline{y}_j)$ on $Y$,
$\partial_Y=\sum_jdy_j\wedge\partial_{y_j}$,
$\overline\partial_{Y}=\sum_jd\overline{y}_j
\wedge\partial_{\overline{y}_j}$,
where
$\partial_{y_j}=\partial/\partial y_j$,
$\partial_{\overline{y}_j}=\partial/\partial\overline{y}_j$\,.
Therefore the generating functional $\log Z_k$ plays the role of
a \kahler potential on $Y$. Note that there is a freedom 
in the definition of $Z_k$, since we can always multiply 
$Z_k\to |f(y)^2|Z_k$, where $f(y)$ is 
a non-vanishing holomorphic function on $Y$. In particular, 
$Z_k$ is not necessary modular invariant on 
$\mathcal M_{\rm g}$. However, this ambiguity will be localized at 
the boundary of $Y$ and will not contribute to the adiabatic 
curvature Eq.\ \eqref{omega1} in the bulk, where the results
of this paper apply. Additional arguments, similar to that of 
Ref. \cite{BK1}, 
can be invoked to fix this ambiguity, which we do not consider here.
The integrals of $\Omega$ over
smooth closed two-cycles in $Y$ define quantized adiabatic
transport coefficients. Here we observe the following relation
\begin{equation}\label{omega2}
\Omega=\Omega^{\mathcal L}-\partial_Y \overline\partial_{Y}
{\det}'\Delta_{\rm L}\,,
\end{equation}
between the adiabatic curvature $\Omega$, the regularized
spectral determinant of the (Kodaira) Laplacian
$\Delta_{\rm L}={\bp_{\rm L}}^*\bp_{\rm L}^{\phantom{a}}$
for the line bundle ${\rm L}=L^k\otimes K^s$ (the asterisk in
$\bp_{\rm L}^*$ denotes the adjoint operator) and the curvature
of the Quillen metric $\Omega^{\mathcal L}$, to be defined next.
Since $\partial_Y \overline\partial_{Y} \log\det'\Delta_{\rm L}$
is an exact form on $Y$, \eqref{omega2} shows that
$\Omega$ and $\Omega^{\mathcal L}$
belong to the same de Rham cohomology class.

Following Quillen \cite{Q} we define a smooth Hermitian metric
 on $\mathcal L$, called Quillen metric,
such that the norm squared of the section
$\mathcal S$ is given by Eq.\  \eqref{detsec},
\begin{equation}\label{Smetr}
\|\mathcal S\|^2=\frac{\det \langle s_j,s_i\rangle_{L^2}}
{\det'\Delta_{\rm L}},
\end{equation}
where $L^2$ metric is given by 
Eqns.\ \eqref{hernorm},\eqref{inner},
since from the determinant formula \eqref{detformula},
it follows that
$\det \langle s_j,s_i\rangle_{L^2}=Z_k$.
By general theory, the Chern curvature of the Quillen metric
is given on $U$ by
\begin{equation*}
\Omega^{\mathcal L}=-\partial_Y \overline\partial_{Y}
\log\|\mathcal S\|^2\,.
\end{equation*}
The curvature of
the Quillen metric
is then the following $(1,1)$-form on $Y$,
\begin{equation}\label{Qano}
\Omega^{\mathcal L}=-\partial_Y \overline\partial_{Y}
\log \frac{Z_k}{\det'\Delta_{\rm L}}\,,
\end{equation}
and the relation \eqref{omega2} follows immediately.

Quillen \cite{Q} observed that the metric in \eqref{Smetr} is
a smooth metric on the determinant line bundle $\mathcal L\to Y$,
where $Y$ is the space of holomorphic structures of
a complex vector bundle on a fixed Riemann surface
and computed precisely the curvature of its Chern connection.
Adapting Quillen's result yields the formula for the curvature
$\Omega^{\mathcal L}$ in the case of $Y=\mathit{Jac}(\Sigma)$.
If $Y = \mathcal M_{\rm g}$ is the moduli space 
for Riemann surfaces, $\Omega^{\mathcal L}$ was
considered by Belavin-Knizhnik \cite{BK1},
and for a smooth family of Dirac operators by
Bismut-Freed \cite{BF} and (for higher-dimensional \kahler manifolds)
by Bismut-Gillet-Soul\'e \cite{BGS}, see also \cite{AMV}
for the account in physics literature.
Note that in general the cohomology class of
$\Omega^{\mathcal L}$ is non-trivial, as a consequence of
Atiyah-Singer family index theorem \cite{AtS}.

The adiabatic phase, or holonomy 
$\exp(- {\int_{\mathcal C}\mathcal A^{\mathcal L}})$ is the phase 
factor acquired by the wave 
function under the transport along a smooth closed contour
$\mathcal C$ in $Y$.   The adiabatic phase consist of 
two  distinct parts: the topological part and the geometric part. 
The  topological part, arises if the contour $\mathcal C$ is 
non-contractible, such as e.g., the Dehn twists on the torus, 
and is independent of the smooth variations of the contour.  
Such contour encloses a boundary point in $Y$ where 
the holomorphic function $f(y)$ discussed above is singular.
This part is due to a flat connection on $\mathcal L\to Y$, 
which has a non-zero holonomy.  
The  geometric part of the adiabatic phase arises even if 
the contour is contractible and depends on the shape of 
the contour. The sum of two phases  equal full adiabatic phase, and if the contour is contractible, the geometric phase is the total phase.

Next, following \eqref{omega2}, we will consider 
the connection 1-form, associated to the curvature of 
the Quillen metric
 $\Omega^{\mathcal L}=d_Y\mathcal A^{\mathcal L}$.
Then $\mathcal A^{\mathcal L}-\mathcal A
 =\frac12(\overline\partial_Y-\partial_Y)
 \log\det'\Delta_{\mathrm{L}}$
 is a globally defined differential form on $Y$, related to the
 Bismut-Cheeger eta form \cite[Definition\,4.33]{BC},
 \cite[(3.166)]{BF}, \cite[Theorem\,2.20]{BGSII}.
The holonomy $\exp( -{\int_{\mathcal C}\mathcal A^{\mathcal L}})$
 on a closed contour $\mathcal C$ is related to the
 $\eta$-invariant \cite{APS,W}, \cite[Theorem\,3.16]{BF},
 \cite[\S6]{BB}.
 In Th. \ref{prop2} we will compute the part of the 
geometric adiabatic phase 
$\int_{\mathcal C}\mathcal A^{\mathcal L}$, associated to 
the Quillen metric.

One of the central goals of this article is to determine the large $k$
asymptotics of the generating functional $\log Z_k$,
to determine its dependence on the geometric parameters and
relate it to the asymptotics of the Quillen metric and to compute
the transport coefficients and the geometric adiabatic phases.
As a consequence of Eq.\ \eqref{Smetr}, we also obtain the
asymptotics of the regularized spectral determinant and compare
it to the previous results of Bismut-Vasserot \cite{BV}.

\subsection{Main results}

The first result concerns the large $k$ asymptotic expansion of
the generating functional $\log Z_k$. We  show that
the asymptotic expansion has schematically the form
$\log Z_k=\log  Z_H+\mathcal F$, where $\log  Z_H$
is the non-local (also referred to as ``anomalous'') 
part of the expansion, and
the ``exact part'' $\mathcal F$
contains local terms. The anomalous part is completely defined
by the Quillen metric \eqref{Smetr}.

Consider the metrics $g_0,g$ and $h_0^k$, $h^k$, related as
$g=g_0+\p_z\p_{\bz}\phi$ and
$h^k=h_0^ke^{-k\psi}$, for scalar functions
$\phi,\psi\in \mathcal C^{\infty}(\Sigma)$
cf.\ \eqref{kclass}, \eqref{kclass1}.
We study the large $k$ asymptotics of 
$\log \frac{Z_k}{Z_{k0}}$, where $Z_{k0}$
is defined as in Eq.\ \eqref{Z} using the metrics $g_0$ and
$h_0^k$.
\begin{thm}[\S\ref{sec31}]\label{prop1}
At large $k$ the following asymptotic expansion holds 
\begin{equation}\nonumber
\log \frac{Z_k}{Z_{k0}}=\log  \frac{Z_H}{Z_{H0}}
+\mathcal F-\mathcal F_0,
\end{equation}
where the anomalous part $\log Z_H$ is given by
\begin{align}\label{freeintro}
\log  \frac{Z_H}{Z_{H0}}=&\frac 2{\pi}\int_\Sigma\left( (A_z
+\frac{1-2s}2\omega_z)(A_{\bz}+\frac{1-2s}2\omega_{\bz})
-\frac1{12} \omega_z\omega_{\bz}\right) d^2z\\ \nonumber
&- (A\to A_0,\,\omega\to \omega_0),
\end{align}
(see also Eq.\ \eqref{ano} for another form of this equation),
where the notation $(A\to A_0,\,\omega\to \omega_0)$ just means that we replace
$A$ by $A_0$ and $\omega$ by $\omega_0$
in the previous expression,
and the exact part $\mathcal F$ admits a large $k$ asymptotic
expansion, with the first three terms given by
\begin{align}
\label{freeintro1}
\mathcal F=-\frac1{2\pi}\int_\Sigma\left[\frac12B\log \frac B{2\pi}
+\frac{2-3s}{12}R\log\frac{B}{2\pi}+\frac1{24}(\log B)
\Delta_g(\log B)\right]
\sqrt gd^2z+\mathcal O(1/k),
\end{align}
where $\Delta_g=2g^{z\bz}\p_z\p_{\bz}$ is the scalar Laplacian.
\end{thm}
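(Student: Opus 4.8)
## Proof Strategy for Theorem \ref{prop1}

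The plan is to compute the large-$k$ asymptotics of $\log Z_k$ starting from the determinant formula $Z_k = \det\langle s_j, s_i\rangle_{L^2}$ mentioned after Eq.~\eqref{Smetr}, by passing to the Quillen-metric decomposition \eqref{Smetr}. Writing $\log Z_k = \log\|\mathcal{S}\|^2 + \log\det'\Delta_{\rm L}$, I would treat the two pieces separately: the first is governed by the Quillen anomaly formula, the second by the heat-kernel / Bergman-kernel asymptotics for $L^k\otimes K^s$. The cleanest route is actually to go through the \textbf{holomorphic anomaly} for the $L^2$-norm directly: fix a reference $(g_0, h_0^k)$ and deform along $g = g_0 + \partial_z\partial_{\bz}\phi$, $h^k = h_0^k e^{-k\psi}$, then write $\log(Z_k/Z_{k0})$ as an integral over the deformation path of $\frac{d}{dt}\log Z_k$, which by the first variation of a $\log\det$ reduces to a trace of the Bergman projector against $\dot\psi$, $\dot\phi$. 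This is where the local index / Bergman kernel expansion enters.

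The key computational input is the Tian–Yau–Zelditch–Catlin expansion of the Bergman kernel $\rho_k(z) = B_k(z,z)$ for sections of $L^k\otimes K^s$ with its diagonal asymptotics $\rho_k = \frac{B}{2\pi}\bigl(1 + a_1/k + \cdots\bigr)$, where $a_1$ is the familiar curvature combination (a multiple of scalar curvature $R$ shifted by the spin contribution from $K^s$). First I would express $\frac{\partial}{\partial t}\log Z_k = \int_\Sigma \rho_k \,(k\dot\psi + s\,\text{(metric variation terms)})\sqrt{g}\,d^2z$ up to lower-order corrections coming from the variation of the volume form and of $h^k$ on the canonical factor. Substituting the Bergman expansion and integrating over the deformation parameter $t\in[0,1]$ produces, order by order in $k$: a term of order $k^2$ and $k$ (which I expect to reorganize into the anomalous piece $\log Z_H$), a term of order $1$ (giving the leading terms of $\mathcal{F}$ in Eq.~\eqref{freeintro1}), and $\mathcal{O}(1/k)$. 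The anomalous terms are precisely the ones that can be written as a bulk integral of a Polyakov-Liouville / Mabuchi-type functional; matching constants, the $\frac{2}{\pi}\int(A_z + \frac{1-2s}{2}\omega_z)(A_{\bz}+\cdots)$ structure in \eqref{freeintro} should emerge from completing the square in the gauge potential $A$ (defined by $F$) plus the spin-shifted spin connection $\omega$, with the $-\frac{1}{12}\omega_z\omega_{\bz}$ coefficient being the universal gravitational anomaly coefficient $c = 1$ (central charge of a chiral boson).

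For the $\mathcal{F}$ part I would instead use the local formula for $\det'\Delta_{\rm L}$: combining the variation of $\log\det'\Delta_{\rm L}$ (again a heat-trace computation, now the full heat kernel not just its zero-mode projection) with the already-computed variation of $Z_k$ via $\|\mathcal{S}\|^2$ and the Quillen curvature formula \eqref{Qano}. The three terms in \eqref{freeintro1} — $\frac{1}{2}B\log\frac{B}{2\pi}$, $\frac{2-3s}{12}R\log\frac{B}{2\pi}$, and $\frac{1}{24}(\log B)\Delta_g(\log B)$ — I expect to read off from integrating $a_1(z)\log\rho_k$ against the deformation, with the last term arising from the cross-variation (the $\partial_Y\bar\partial_Y$ acting inside the log). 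The main obstacle will be bookkeeping the spin contributions consistently: the factor $K^s$ shifts both the curvature term in the Bergman expansion and the effective magnetic field, and one must be careful that the "$\log\frac{B}{2\pi}$" appearing (rather than $\log B$ alone) is the genuine dimensionless combination, while the Laplacian term correctly sees $\log B$; reconciling these requires tracking which normalization of the volume form and of $h$ is used at each stage. A secondary subtlety is justifying that the asymptotic expansion is uniform enough to integrate term-by-term over the deformation path $t \in [0,1]$, which I would handle by invoking the uniform (in $C^\infty$) Bergman kernel estimates of Ma–Marinescu.
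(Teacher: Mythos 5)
Your core route is the same as the paper's proof in \S\ref{sec31}: differentiate $\log Z_k=\log\det\langle s_j,s_l\rangle$ along the deformation \eqref{kclass}--\eqref{kclass1}, note that the first variation of the Gram determinant is the integral of the Bergman kernel on the diagonal against $k\,\delta\psi$ and $\Delta_g\delta\phi$ (the paper's Eq.\ \eqref{freeen}), insert the Ma--Marinescu expansion for $L^k\otimes K^s$ (Eq.\ \eqref{expansion}), and integrate back along the path with the boundary condition $\phi=\psi=0$. Where you genuinely diverge is in how the anomalous/exact split is produced. The paper never touches $\det'\Delta_{\rm L}$ in this proof: it exhibits explicit antiderivatives -- the functionals $S_2,S_1,S_L$ of \eqref{func1}--\eqref{func3} and the local functionals $\int B\log B$, $\int R\log B$, $\int(\log B)\Delta_g\log B$ -- whose variations \eqref{varfor1}--\eqref{logBlogB} match the expansion term by term; this matching is the real computational content behind \eqref{ano} and \eqref{freeintro1}. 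You instead propose to isolate $\log Z_H$ as $\log\|\mathcal S\|^2$ via the Quillen decomposition \eqref{Smetr} and take $\mathcal F$ as the remainder. That is viable and non-circular only if you obtain the variation of $\log\det'\Delta_{\rm L}$ from the short-time heat expansion (which is exactly the paper's separate Theorem \ref{prop11}); you cannot quote a ready-made ``local formula'' for the large-$k$ asymptotics of $\det'\Delta_{\rm L}$, since only the Bismut--Vasserot leading term is in the literature and the subleading terms in \eqref{freeintro1} are precisely an output of this theorem. Your hybrid buys the identification $\log Z_H=\log\|\mathcal S\|^2$ at the same time, at the price of an additional heat-kernel computation, but in either organization the explicit integration of the Bergman expansion -- the step you leave at the level of ``matching constants'' and ``completing the square'' -- still has to be carried out to land on \eqref{freeintro} and \eqref{freeintro1}.

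A few corrections to your bookkeeping. To reach the $O(1)$ terms of $\mathcal F$ you need the expansion two orders beyond the leading term, i.e.\ the coefficients $\Delta_g(B^{-1}R)$ and $\Delta_g(B^{-1}\Delta_g\log B)$ in \eqref{expansion}, not just the first correction $a_1$. The split is by locality, not by powers of $k$: the anomalous part contains the order-one gravitational term $\bigl(\tfrac{(1-2s)^2}{4}-\tfrac1{12}\bigr)S_L(\phi)$, while $\mathcal F$ begins at order $k\log k$, so your assignment ``$k^2,k$ anomalous, $O(1)$ exact'' is not accurate. The remark about $\partial_Y\overline\partial_Y$ acting ``inside the log'' is out of place here: the deformation is in $(\phi,\psi)$ on a fixed surface, and the parameter space $Y$ plays no role in this proof. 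Finally, your appeal to the uniform $\mathcal C^\infty$ Bergman estimates of Ma--Marinescu to justify term-by-term integration along the path is the right justification and matches what the paper implicitly uses.
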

Moreover,
\begin{thm}[\S\ref{sec4}]\label{prop11}
The anomalous part of the generating functional corresponds to
the Quillen metric \eqref{Smetr},
\begin{equation}\nonumber
\log Z_H=\log\|\mathcal S\|^2,
\end{equation}
and the exact part
corresponds to the spectral determinant
\begin{displaymath}
\log\frac{{\det}'\Delta_{\rm L}}{{\det}'\Delta_{{\rm L}0}}
=\mathcal{F}-\mathcal{F}_0\,.
\end{displaymath}
\end{thm}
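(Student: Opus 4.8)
The plan is to reduce both assertions to a single identification, using Quillen's formula. By the determinant formula \eqref{detformula} together with the definition \eqref{Smetr} of the Quillen metric, for every choice of the metrics $g,h^k$ (and every admissible $k$, so that $H^1$ vanishes) one has
\[
Z_k=\|\mathcal S\|^2\cdot{\det}'\Delta_{\rm L},
\]
and hence, subtracting the contribution of the reference metrics $g_0,h_0^k$,
\[
\log\frac{Z_k}{Z_{k0}}=\log\frac{\|\mathcal S\|^2}{\|\mathcal S_0\|^2}
+\log\frac{{\det}'\Delta_{\rm L}}{{\det}'\Delta_{{\rm L}0}}\,.
\]
Comparing this with the decomposition $\log(Z_k/Z_{k0})=\log(Z_H/Z_{H0})+(\mathcal F-\mathcal F_0)$ of Theorem \ref{prop1}, it is enough to prove one of the two claimed identities, say
\begin{equation}\label{toprove}
\log\frac{\|\mathcal S\|^2}{\|\mathcal S_0\|^2}=\log\frac{Z_H}{Z_{H0}}
\end{equation}
with $\log Z_H$ as in \eqref{freeintro}; the identity for ${\det}'\Delta_{\rm L}$ then follows by subtraction, since $X+Y=X'+Y'$ together with $X=X'$ forces $Y=Y'$.

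To prove \eqref{toprove} I would invoke the anomaly formula for the Quillen metric of Bismut--Gillet--Soul\'e \cite{BGS} (see also \cite{BF,Q,BK1}), specialised to the positive Hermitian holomorphic line bundle ${\rm L}=L^k\otimes K^s$ over the Riemann surface $\Sigma$. Under the change $(g_0,h_0^k)\to(g,h^k)$ with $g=g_0+\p_z\p_{\bz}\phi$ and $h^k=h_0^k e^{-k\psi}$, that formula expresses $\log(\|\mathcal S\|^2/\|\mathcal S_0\|^2)$ --- the ratio of Quillen norms of one and the same holomorphic section $\mathcal S$ of the fixed determinant line --- as the Bott--Chern transgression of the index density $\Td(T\Sigma)\,\ch({\rm L})$, integrated over $\Sigma$; in complex dimension one this is the classical Polyakov-type functional, quadratic in $\phi,\psi$ and linear in the curvatures of $(g_0,h_0^k)$. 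Rewriting it in terms of the $(1,0)$-component $A_z$ of the gauge potential of $(L^k,h^k)$ --- a potential for $F$, hence non-local in $F_{z\bz}$ --- and of the $(1,0)$-component $\omega_z$ of the spin connection of $(K^s,g)$ --- non-local in $\ric(g)$ --- one recognises the completion of the square built from $c_1(L^k)+\tfrac{1-2s}{2}c_1(T\Sigma)=\tfrac1{2\pi}\bigl(F+\tfrac{1-2s}{2}\ric(g)\bigr)$: the piece quadratic in $A_z+\tfrac{1-2s}2\omega_z$ is the transgression of the corresponding squared Chern form, while $-\tfrac1{12}\omega_z\omega_{\bz}$ originates from the $\tfrac1{12}c_1(T\Sigma)^2$ term of $\Td(T\Sigma)$, and the overall factor $\tfrac2\pi$ together with the coefficient $\tfrac1{12}$ are exactly the Bismut--Gillet--Soul\'e normalisation constants. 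This matches the transgression integral with the right-hand side of \eqref{freeintro}, i.e.\ with $\log(Z_H/Z_{H0})$, which is \eqref{toprove}. The appearance of the potentials $A_z,\omega_z$ rather than of the curvatures alone is precisely what makes $\log Z_H$ non-local, consistent with the terminology of \S\ref{M}.

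Granting \eqref{toprove}, the second assertion is immediate:
\[
\log\frac{{\det}'\Delta_{\rm L}}{{\det}'\Delta_{{\rm L}0}}
=\log\frac{Z_k}{Z_{k0}}-\log\frac{\|\mathcal S\|^2}{\|\mathcal S_0\|^2}
=\log\frac{Z_k}{Z_{k0}}-\log\frac{Z_H}{Z_{H0}}=\mathcal F-\mathcal F_0
\]
by Theorem \ref{prop1}. As an independent consistency check one may compute the large-$k$ asymptotics of $\log{\det}'\Delta_{\rm L}$ directly: since $\Delta_{\rm L}=\bp_{\rm L}^*\bp_{\rm L}^{\phantom{a}}$ has its nonzero spectrum bounded below by a gap of order $k$, the number $\zeta'_{\Delta_{\rm L}}(0)$ is controlled by the small-time heat expansion of $e^{-t\Delta_{\rm L}}$ combined with the Bergman-kernel expansion on the lowest Landau level, in the spirit of Bismut--Vasserot \cite{BV}; carrying this out should reproduce \eqref{freeintro1}.

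The main obstacle is the matching in the second paragraph: reconciling the intrinsic Bismut--Gillet--Soul\'e transgression with the explicitly normalised functional \eqref{freeintro} demands careful bookkeeping of the powers of $2\pi$ and $i$ in the curvature conventions of \S\ref{M} (where $-iF=F_{z\bz}dz\wedge d\bz$ and $\ric(g)=-\p_z\p_{\bz}(\log g_{z\bz})\,idz\wedge d\bz$), a correct treatment of the half-integer spin $s$ through the square root $\mathcal K$ of $K$, and confirmation that the coefficient $-\tfrac1{12}$ emerges with the right sign. A robust way to organise this is variationally: compare the first variations $\delta_\phi,\delta_\psi$ of $\log\|\mathcal S\|^2$ --- given by the infinitesimal Bismut--Gillet--Soul\'e anomaly, i.e.\ the variation of the local index density --- with the corresponding variations of \eqref{freeintro}, check that they agree, and integrate, the integration constant being fixed at $\phi=\psi=0$.
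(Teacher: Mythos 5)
Your logical skeleton is fine and coincides with the paper's: from \eqref{detformula} and \eqref{detsec} one has $Z_k=\|\mathcal S\|^2\,{\det}'\Delta_{\rm L}$ exactly, so once the identity $\log(\|\mathcal S\|^2/\|\mathcal S\|_0^2)=\log(Z_H/Z_{H0})$ is established, the statement about ${\det}'\Delta_{\rm L}$ follows by subtraction from Theorem \ref{prop1} (understood, as in the paper's Corollary, in the asymptotic sense in which $\mathcal F$ is defined). The gap is that you never actually prove that identity. You invoke the Bismut--Gillet--Soul\'e anomaly formula \cite[Theorem 1.23]{BGS} and then assert that its Bott--Chern transgression, specialised to ${\rm L}=L^k\otimes K^s$ on a Riemann surface, ``matches'' \eqref{freeintro}; but this matching --- producing precisely the coefficients $-k^2$, $k\tfrac{1-2s}{2}$ and $\tfrac{(1-2s)^2}{4}-\tfrac1{12}$ in front of the functionals \eqref{func1}--\eqref{func3}, with the conventions of \S\ref{M} --- is the entire content of Theorem \ref{prop11}; you yourself label it ``the main obstacle'' and defer it to an unexecuted variational bookkeeping. (Incidentally, the transgression is not quadratic in $\phi$: the $S_L$ piece is the Liouville functional, with $\log\bigl(1+\tfrac12\Delta_0\phi\bigr)$ terms.) As written, the argument is a reduction of the theorem to an unproved claim of equal difficulty, so it does not constitute a proof.

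For comparison, the paper deliberately avoids \cite[Theorem 1.23]{BGS} (it notes the theorem is a special case of it) and instead gives a direct, exact computation: it varies $\log{\det}'\Delta_s^-$ through the heat-kernel representation, computes $\delta\bar D$, $\delta\bar D^*$, uses the rearrangement identity $\tr\,\bar D e^{-t\Delta_s^-}\bar D^*=\tr\,\Delta_{s-1}^+e^{-t\Delta_{s-1}^+}$ to reduce the $t$-integral to boundary terms, identifies the $t=\infty$ contribution with $-\delta\log Z_k$ via projection onto the zero modes, and evaluates the $t=\epsilon\to0$ contribution with the short-time expansions \eqref{heat}; the $1/\epsilon$ singularities cancel and the local remainders are exactly the variations \eqref{varfor1}--\eqref{varfor3}, which integrate (with the boundary condition at $\phi=\psi=0$) to \eqref{thm21}, i.e.\ to \eqref{ano}. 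Your closing suggestion --- compare first variations and integrate --- is in effect this strategy; to complete your proof you would either have to carry out that heat-kernel computation (as the paper does), or genuinely unwind the Bott--Chern classes of \cite{BGS} for this specific bundle and metric data and verify the coefficients, neither of which is done in your proposal.
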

Let us first explain the content of Eqs.\ \eqref{freeintro},
\eqref{freeintro1}.
Here
$A_z,\,A_{\bz}$ are the components of the connection 1-form
for the magnetic field strength $F=dA$,
and $\omega_z,\,\omega_{\bz}$ are the components of the
spin connection 1-form
${\rm Ric}(g)=d\omega$, see \eqref{gauge2}, \eqref{gauge1}
for details. Also, $A_0$ and $\omega_0$ refer to the same objects 
defined with respect to the background metrics $g_0$ 
and $h_0^k$. The  objects are weighted as follows:  
$B$ (and $A_z,A_{\bz}$) are of order $k$,
meaning $B/k$ is a smooth positive function on $\Sigma$, 
independent of $k$,
while the objects that depend on the metric, such as $\Delta_g$,
$R$ and $\omega_z,\omega_{\bz}$ are considered to be of order
$1$. Hence the remainder ${1}/{k}$ terms in \eqref{freeintro1} may
contain terms of the type  $1/B$.

The proof of Theorem \ref{prop1} is based on the known
asymptotic expansion of the Bergman kernel.
The variational formula \eqref{freeen} connects the large
$k$ asymptotic expansion
of $\log Z_k$ with the large $k$ asymptotic expansion of
the Bergman kernel for positive Hermitian holomorphic line bundle
$L^k$, established in Refs.\ \cite{Z,C}. The more general version of
the Bergman kernel expansion pertinent to the present paper
was given in Ref.\ \cite{MM,MM1}.
We also refer to \cite{MM} for a comprehensive study of several
analytic and geometric aspects of Bergman kernel.
Suppose $\{s_j'\}$ is a basis of
$H^0(\Sigma,L^k\otimes K^s)$, orthonormal with respect to
the inner product
Eq.\ \eqref{inner}. Then the Bergman kernel on the diagonal
$B_k(z,\bz)$ is defined by the sum
\begin{align}\nonumber
B_k(z,\bz)=&\sum_{j=1}^{N_k}|s_j'(z)|^2h^kg_{z\bz}^{-s}
\label{B5},
\end{align}
The first three terms of
its large $k$ asymptotic expansion of the Bergman kernel on the
diagonal are\begin{align}
B_k(z)=B+\frac{1-2s}4R+\frac14\Delta_g\log B+\mathcal O(1/k)\,.
\end{align}
In  Eq.\ \eqref{expansion} we list more terms in this expansion, relevant for
the result of Theorem\ \ref{prop1}, and in the Appendix B we review the Bergman kernel expansion.

Integrating the variational formula \eqref{freeen}, we find that
the generating functional can be separated in two parts.
The anomalous part \eqref{freeintro} is a non-local functional of
the magnetic field density $B$ and scalar curvature $R$.
This part is given in Eq. \eqref{freeintro}.
The exact part $\mathcal F$ of the asymptotic expansion consists
of local functionals of $B$ and $R$. Using Theorem \ref{prop11}
we interpret $\mathcal F$ as
$\log\det'\Delta_{\rm L}$ and  explain that all
corrections of order $1/k$ to $\mathcal F$ are in fact given by
local functionals.

Then, in Theorem \ref{prop11} we identify the anomalous part of
the expansion $\log Z_H$
with the Quillen metric \eqref{Smetr}. We use the heat kernel
formula for the regularized
determinant $\det'\Delta_{\rm L}$ of the Laplacian for
the line bundle ${\rm L}=L^k\otimes K^s$,
in order to derive the anomaly formula for
$\log\|\mathcal S\|^2$ under the variations of \kahler
and magnetic potentials. Let us point out that this is consistent
with the general anomaly formula
for $\log\|\mathcal S\|^2$ of Ref.\ \cite[Theorem\,1.23]{BGS}.
As a consequence, we note that the exact
part $\mathcal F$ \eqref{freeintro1} of the expansion can be
identified with asymptotics of
$\log \det'\Delta_{\rm L}$ for high powers $k$ of the positive
line bundle.
The leading term in this expansion was derived by
Bismut-Vasserot \cite{BV},
see also \cite[\S 5.5.5]{MM}. Thus Eq.\ \eqref{freeintro1}
provides the leading corrections to their result.

Theorem \ref{prop1} generalizes the large $k$ asymptotic
expansion of the generating functional derived
in Ref.\ \cite{K}, where the case of constant
magnetic field $B=k$ and $s=0$ was considered. For analogous
results in fractional QHE, derived by various physics methods
see Refs.\ \cite{ZW,CLW,CLW1,FK}.
In particular the exact terms in Eq.\ \eqref{freeintro1}
are in agreement with \cite[Eq.\ (130)]{CLW1}.


In the last section we derive the formula for the geometric part
of the adiabatic phase, associated with the curvature form 
$\Omega^{\mathcal L}$. The starting point is the 
Bismut-Gillet-Soul\'e curvature
 Eq.\ \eqref{GRR}\footnote{In the QHE context this formula was
 invoked in Ref.\ \cite{TP06}.} (\cite[Theorems\,1.9, 1.27]{BGS},
see also \cite{BF, BJ}) 
\begin{equation}\label{BFformula}
\Omega^{\mathcal L}=-2\pi i\int_{M|Y}
\bigl[{\rm Ch}(E){\rm Td}(TM|Y)\bigr]_{(4)}.
\end{equation}
Here $\sigma:M\to Y$ is a fibration over $Y$ (universal curve), with
fiber diffeomorphic to $\Sigma$,
and $\Sigma_y=\sigma^{-1}(y)$ for $y\in Y$.
Also,
\begin{equation} \label{Ek}
    E = \tilde{L}^k \otimes \tilde{K}^s \to M\,,
    \end{equation}
where the line bundles $\tilde{L}^k, \tilde{K}^s$ have been
extended from the individual
Riemann surfaces $\Sigma_y$ to the universal curve,
see \S \ref{sec5} for details.
Moreover, ${\rm Ch}(E)$ is the Chern form and ${\rm Td}(TM|Y)$ is
the Todd form and the
integrand in Eq.\ \eqref{BFformula} is a 4-form on the manifold $M$.
The integration
goes over the fiber $\Sigma_y$ over
a fixed point $y\in Y$, so that the result is a 2-form on $Y$.
We introduce the curvature
$\mathrm{F}^{E}=\mathrm{F}-s\mathrm{R}_{TM|Y}$ where
$\mathrm{F}$  refers to the part of the curvature 2-form
corresponding to the line bundle $\tilde L^k\to M$.
Locally on $M$, ${\rm F} = d A$, with a connection 1-form $A$,
and $\mathrm{R}_{TM|Y}=d\omega$ with
a spin-connection 1-form $\omega$.

Then using the curvature
formula \eqref{BFformula} we show that the geometric part
(induced by $\Omega^{\mathcal L}$) of the adiabatic phase
is given by the Chern-Simons action.
\begin{thm}[\S\ref{sec52}]\label{prop2}
The adiabatic phase arising from the Quillen metric over a closed contour
$\mathcal C\subset Y$ takes the form 
    \begin{align}\label{eq:gCS5}
\exp\Big(- {\int_{\mathcal C}\mathcal A^{\mathcal L}}\Big)
    = \exp\Big(2i\pi\int_{\sigma^{-1}(G)}
 \bigl[{\rm Ch}(E){\rm Td}(TM|Y)\bigr]_{(4)}\Big),
\end{align} 
if $\mathcal{C}$ is the boundary of a domain $G\subset Y$.

For a general closed contour $\mathcal C\subset Y$,  if the metric $g$ on $T\Sigma$
is the restriction of a K\"{a}hler metric 
on a neighborhood  of $\sigma^{-1}(\mathcal C)$, the adiabatic phase is given by
\begin{align}\label{eq:gCS6}
\exp\Big(- {\int_{\mathcal C}\mathcal A^{\mathcal L}}\Big)
    = \exp\big( 2i\pi \overline{\eta}\big),
\end{align}
where $\overline{\eta}$ is the adiabatic limit of the
reduced eta invariant for the fibration 
$\sigma: \sigma^{-1}(\mathcal C) \to \mathcal C$.
If $\overline{\eta}_1$ is the adiabatic limit
of the reduced eta invariant  associated to metrics
$g_1$ and $h^L_1$ on $T\Sigma$ and $L$, 
the variation of the adiabatic limit is given by 
the (2+1)d Chern-Simons functional
\begin{multline}\label{curcm6.201}
2\pi i(\overline{\eta}-\overline{\eta}_{1})
=\frac{i}{4\pi}\int_{\sigma^{-1}(\mathcal{C})} A\wedge dA
+\frac{1-2s}2(A\wedge d\omega+dA\wedge\omega)+
\left(\frac{(1-2s)^2}4-\frac1{12}\right)\omega\wedge d\omega\,\\
- (A\to A_{1}, \omega\to \omega_{1}).
\end{multline}
\end{thm}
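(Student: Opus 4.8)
The plan is to establish the three assertions in turn: the first and third reduce to Stokes' theorem together with Chern--Weil/Chern--Simons bookkeeping, while the middle one is a direct appeal to the Bismut--Freed holonomy theorem.

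\emph{Step 1: the bounding case \eqref{eq:gCS5}.} Let $\mathcal{A}^{\mathcal L}$ be the Chern connection of the Quillen metric, with curvature $\Omega^{\mathcal L}$ given by \eqref{BFformula}. When $\mathcal C=\partial G$ with $G\subset Y$, the holonomy of $\mathcal{A}^{\mathcal L}$ around $\mathcal C$ is $\exp(-\int_G\Omega^{\mathcal L})$ (no topological term appears, and the holomorphic ambiguity $f(y)$ does not affect $\Omega^{\mathcal L}$ since $\partial_Y\overline\partial_Y\log|f|^2=0$). Inserting \eqref{BFformula} and applying Fubini — fibre integration over $\Sigma_y$ followed by integration over $G$ is integration over the $4$-manifold $\sigma^{-1}(G)$ — turns $-\int_G\Omega^{\mathcal L}$ into $2\pi i\int_{\sigma^{-1}(G)}[{\rm Ch}(E){\rm Td}(TM|Y)]_{(4)}$, which is \eqref{eq:gCS5}.

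\emph{Step 2: the eta-invariant formula \eqref{eq:gCS6}.} If $g$ is the restriction of a \kahler metric on a neighbourhood of $\sigma^{-1}(\mathcal C)$, the fibrewise operators $\bp_{\rm L}+\bp_{\rm L}^{*}$ on $\Omega^{0,\bullet}(\Sigma_y,L_y^k\otimes K_y^s)$ assemble into a smooth family of genuine Dirac operators: using the square root $\mathcal K$ of $K$, this family is the spin Dirac operator along the fibres of $\sigma\colon\sigma^{-1}(\mathcal C)\to\mathcal C$ twisted by the auxiliary bundle $L^k\otimes K^{s-1/2}$. Since $\deg(L^k\otimes K^s)$ is large, $H^1$ vanishes, so $\mathcal L$ is an honest line bundle and — as recalled in the discussion preceding the statement, where $\mathcal{A}^{\mathcal L}-\mathcal A$ was identified with a Bismut--Cheeger eta form — its Quillen-metric Chern connection is the Bismut--Freed connection of this family. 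The Bismut--Freed holonomy theorem \cite[Theorem\,3.16]{BF} then gives $\exp(-\int_{\mathcal C}\mathcal{A}^{\mathcal L})=\exp(2\pi i\,\overline{\eta})$, where $\overline{\eta}$ is the adiabatic limit (in the sense of Bismut--Cheeger \cite{BC}, obtained by rescaling the metric on the base circle) of the reduced eta invariant of the Dirac operator on $\sigma^{-1}(\mathcal C)$; this is \eqref{eq:gCS6}.

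\emph{Step 3: the Chern--Simons variation \eqref{curcm6.201}.} Combining \eqref{eq:gCS6} with \eqref{eq:gCS5} gives $2\pi i\,\overline{\eta}\equiv2\pi i\int_{\sigma^{-1}(G)}[{\rm Ch}(E){\rm Td}(TM|Y)]_{(4)}\pmod{2\pi i\Z}$, so the change in $2\pi i\,\overline{\eta}$ between two choices $(A,\omega)$ and $(A_1,\omega_1)$ of connections is $2\pi i$ times the integral over $\sigma^{-1}(G)$ of the difference of the corresponding integrands, which by Stokes' theorem equals $2\pi i\int_{\sigma^{-1}(\mathcal C)}$ of their Chern--Simons transgression. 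To evaluate it, set $c=\tfrac1{2\pi}{\rm F}$ and $x=\tfrac1{2\pi}{\rm R}_{TM|Y}$, so that $c_1(E)=c-sx$ and ${\rm Td}(TM|Y)=1+\tfrac x2+\tfrac{x^2}{12}$; the degree-$4$ component is
\begin{equation*}
[{\rm Ch}(E){\rm Td}(TM|Y)]_{(4)}=\tfrac12c^2+\tfrac{1-2s}{2}\,cx+\Bigl(\tfrac{s^2-s}{2}+\tfrac1{12}\Bigr)x^2.
\end{equation*}
Transgressing $\tfrac12 dX\wedge dX\rightsquigarrow\tfrac12X\wedge dX$ and $dX\wedge dY\rightsquigarrow\tfrac12(X\wedge dY+dX\wedge Y)$, substituting ${\rm F}=dA$ and ${\rm R}_{TM|Y}=d\omega$, pulling out the common factor $\tfrac12$, and using the identity $s^2-s+\tfrac16=\tfrac{(1-2s)^2}{4}-\tfrac1{12}$ together with the normalisation $2\pi i/(2\,(2\pi)^2)=\tfrac{i}{4\pi}$ reproduces \eqref{curcm6.201}.

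\emph{The main obstacle.} Steps 1 and 3 are Stokes' theorem plus characteristic-class arithmetic. The real work is Step 2: one must realise the fibrewise Dolbeault operators as an \emph{honest} family of Dirac operators — exactly where the \kahler hypothesis near $\sigma^{-1}(\mathcal C)$ is needed, since it is what equips the $3$-manifold $\sigma^{-1}(\mathcal C)$ with the metric and the Dirac operator whose reduced eta invariant enters the statement — match the Quillen-metric Chern connection with the Bismut--Freed connection (clean here precisely because $H^1=0$, so there is no kernel-jump or spectral-flow subtlety), and then correctly combine the Bismut--Freed holonomy theorem with the Bismut--Cheeger adiabatic-limit analysis; the adiabatic limit is essential, since the holonomy equals $\lim_{\varepsilon\to0}\exp(2\pi i\,\overline{\eta}_\varepsilon)$ rather than $\exp(2\pi i\,\overline{\eta})$ for any fixed metric on $\sigma^{-1}(\mathcal C)$.
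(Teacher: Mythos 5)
Your Steps 1 and 2 follow essentially the paper's own route: Stokes plus the curvature formula \eqref{BFformula} for the bounding case, and the identification of the Quillen--Chern connection with the Bismut--Freed connection followed by the holonomy theorem \cite[Theorem 3.16]{BF} for \eqref{eq:gCS6}. Two refinements are worth flagging there: the identification of the two connections is not ``clean because $H^1=0$'' but is the content of \cite[Theorem 1.15]{BGS}, and this is precisely where the K\"ahler hypothesis near $\sigma^{-1}(\mathcal C)$ enters; and \cite[(3.164)]{BF} carries a factor $(-1)^{\mathrm{Ind}\,D_+}$ (and concerns the dual of $\mathcal L$), so one must argue, as the paper does via the non-trivial spin structure induced by the spinor bundle $\Lambda(T^{*(0,1)}\sigma^{-1}(\mathcal C))\otimes E$, that no extra sign appears in \eqref{eq:gCS6}.

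The genuine gap is in Step 3. You derive \eqref{curcm6.201} by combining \eqref{eq:gCS5} with \eqref{eq:gCS6}, which presupposes that $\mathcal C$ bounds a domain $G\subset Y$; but \eqref{curcm6.201} is asserted (and needed) for a \emph{general} closed contour, and in $Y=\mathcal M_{\rm g}\times \mathit{Jac}(\Sigma)$ the physically relevant loops (cycles in the Jacobian torus, Dehn twists) are typically non-bounding, so your argument gives nothing for them. The paper instead obtains the variation of $\overline\eta$ intrinsically on $\sigma^{-1}(\mathcal C)$, using the variation formula for reduced eta invariants \cite[(3.196)]{BF} together with \cite[\S6]{BB} applied to the cylinder fibration $\sigma^{-1}(\mathcal C)\times[0,1]\to\mathcal C\times[0,1]$, interpolating the metrics $(g,h^L)$ and $(g_1,h^L_1)$ by a path and computing the resulting Chern--Simons transgression directly; this works without any bounding hypothesis and is the step your proposal is missing. (In the bounding case your comparison does give the identity, but only modulo $2\pi i\Z$ --- which matches the paper's actual precision --- and your normalisation $c=\frac{1}{2\pi}\mathrm F$ drops the factor $i$ in $c_1=\frac{i}{2\pi}\mathrm F$, so the overall sign in your transgression bookkeeping should be rechecked against the conventions used in \eqref{GRR}.)
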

For a general contour $\mathcal{C}$ and general metric $g$
we can obtain a similar formula to 
\eqref{eq:gCS6} by using the anomaly formula \cite[Theorem 1.23]{BGS}
(cf.\ Theorem \ref{thm2}, \eqref{thm21}) as in the proof of \cite[Theorem 1.27]{BGS}.

For $s=0$ this formula is equivalent to the Chern-Simons action
of QHE derived in Ref.\ \cite{AG1}.
Let us also point out that results of this paper admit generalization
to the Laughlin states in fractional QHE with the filling fraction
$\nu=1/\beta$, where $\beta$ is an integer,
as was reported in Ref.\ \cite{KW}. These ``$\beta$-deformations''
of the Quillen anomaly and Bismut et.\ al.\ curvature formulas will
be explored further elsewhere.

\subsection{Further context}

It was understood early on that the exact quantization of the
transport coefficients
in QHE essentially has a geometric origin
\cite{T1,T2,AS,ASZ,ASZ1}. In particular, the quantization of
the Hall conductance in the integer QHE case was attributed to
the fact, that the QH-wave functions are sections of a line bundle
over the space of Aharonov-Bohm fluxes
(Jacobian variety $\mathit{Jac}(\Sigma)$) on a higher-genus
Riemann surface,
and the corresponding adiabatic curvature belongs to
an integer Chern class \cite{ASZ}. In the seminal papers
\cite{ASZ1} and \cite{L1} the existence of another quantized
parameter in the integer QHE -- so called odd
viscosity (terms anomalous viscosity or Hall viscosity are also
used) -- was deduced when considering the adiabatic curvature
of integer QH-states for the adiabatic transport on the moduli space
of complex structures of a torus. For the moduli spaces of complex
structures of higher-genus Riemann surfaces the adiabatic curvature
in the integer QHE was first derived in the pioneering paper
\cite{L2}. These results were also generalized
to the fractional Quantum Hall states, for the quantization of
the conductance \cite{AS,TW} and for the anomalous viscosity
on the torus \cite{TV1,TV2,R,RR}. Existence of yet another,
third quantized coefficient in both, integer and the fractional
QHE associated with complex structure moduli spaces of
higher-genus Riemann surfaces was proposed in \cite{KW},
see \cite{BR1} for related results.

As we have emphasised, many aspects of the geometry of Quantum Hall
states are encoded in the generating functional 
$\log Z_k$ \eqref{Z}.
The geometric definition of $Z_k$ in QHE context in terms of 
holomorphic sections of
the line bundle and \kahler metrics on the Riemann surface was
given in \cite{K}. There exist several approaches to calculate
the generating functional. In \cite{ZW} and \cite{CLW,CLW1}
the  Ward identity method was applied in case when $\Sigma$
is the sphere. In \cite{K}, as well as in
the present work, the large $k$ expansion is obtained from
asymptotic expansion of the Bergman kernel \cite{Z,C,MM} 
on the surface of any genus.
Interestingly, objects closely related to the QHE generating
functional are useful in  \kahler geometry \cite{Don2,B1,B2}.
The field theory approach to calculate the large $k$ asymptotics
of the generating functional was developed in \cite{FK}, and the
collective field theory approach was proposed in \cite{LCW}.

Alternatively, in the case of the integer QHE the transport
coefficients can be also directly computed through the linear
response theory as it has been done in \cite{AG1}.
The linear response theory can be also cast in the
form of the so-called effective action approach.
Although conceptually different,  the effective action of
the LLL and the geometric adiabatic phase discussed in this paper
are identical in the case of the QHE. Both are given by the
Chern-Simons functional. For other relevant approaches to
the effective action which also lead to
the Chern-Simons functional see \cite{AG4,Son,BR,BR1}.

The generating functional is formally equivalent to thermodynamic
potential of 2d Coulomb plasma of equal charges in a neutralized
background, the so-called plasma analogy.
From this point of view the normalization factor is
a partition function of the 2d Coulomb gas
(see, e.g., \cite{For} for the comprehensive account).
This relation gives allows us to establish the link between
the thermodynamics of 2d Coulomb plasma, Quillen metric and
the spectral determinant of the Laplacian.

Finally, let us remark that the most notable occurrence of
the Quillen metric in physics
goes back to the work of Belavin-Knizhnik \cite{BK1}
on the holomorphic anomaly of the
Polyakov string. Cancellation of the
holomorphic anomaly yields the result that the
measure of the critical string is a modulus square of the
holomorphic function on the moduli space. Interestingly,
in contrast to the critical string, in QHE the
curvature of Quillen metric does not vanish and furthermore 
encodes
important physical information (quantized adiabatic transport).
\\

\noindent
{\bf Acknowledgments.}
We would like to thank J.-M.~Bismut, V.~Pestun and S.\ Zelditch for useful discussions and 
comments on the manuscript and the anonymous referees for
useful comments and suggestions. 
We thank Zhenghan Wang for pointing us Ref.\ \cite{Kirby89}.
SK is partially supported by 
the Max Delbr\"uck prize for junior researchers at 
the University of Cologne, the Humboldt postdoctoral
fellowship and the grants NSh-1500.2014.2,  RFBR 15-01-04217
and  NSF DMR-1206648.
XM is partially supported by ANR-14-CE25-0012-01
and funded through the Institutional Strategy of
the University of Cologne within the German Excellence Initiative.
GM was partially supported by the DFG project SFB TR12.
The research of PW was carried out at the RAS Institute
for Information Transmission Problems under support by a grant
from the Russian Foundation for Sciences (project 14-50-00150).
SK, XM and GM acknowledge the hospitality of the Simons Center
for Geometry and Physics
where the research leading to this paper has been started.

\section{Generating functional and determinant line bundle}
\label{freefer}
\subsection{Wave functions on the lowest Landau level}
\label{sec21}

We begin by defining the wave functions on the lowest Landau level,
on a compact (connected)
Riemann surface $\Sigma$. Let $(L,h)$ be a positive Hermitian
holomorphic line bundle
and let $(L^k,h^k)$ be its $k$th tensor power, with
$k$ a positive integer.
In local complex coordinates $z,\bz$ the metric on $\Sigma$
is diagonal
$ds^2=2g_{z\bz}|dz|^2$ and the area of the surface will be fixed
$\int_\Sigma\sqrt g d^2z=2\pi$. The curvature $(1,1)$-form of
the Hermitian metric is $-iF$, where
\begin{align}\nonumber
F= (\p_zA_{\bz}-\p_{\bz}A_z)dz\wedge d\bz
=-\p_z\p_{\bz}(\log h^k) idz\wedge d\bz
=:F_{z\bz}idz\wedge d\bz. 
\end{align}
Here  $A=A_zdz+A_{\bz}d\bz$ is the local gauge connection
1-form for $F$ and
\begin{align}
\label{gauge2}
&A_z=\frac12i\p_z\log h^k,\quad A_{\bz}
=-\frac12 i\p_{\bz}\log h^k.
\end{align}
The positivity of $(L,h)$ means that
$F$ is a positive (1,1)-form on $\Sigma$,
or equivalently, the function $F_{z\bz}=-\p_z\p_{\bz}\log(h^k)$
is positive.
Instead of the $(1,1)$-form $F$ we will mainly use the magnetic field
$B=g^{z\bz}F_{z\bz}$. It is a globally defined scalar function,
also everywhere  positive on $\Sigma$.
Then the integer tensor power $k$ has the meaning of the total flux
of the magnetic field through the surface
\begin{equation}\label{e:flux}
\frac1{2\pi}\int_\Sigma F
=\frac1{2\pi}\int_\Sigma B\sqrt gd^2z=k.
\end{equation}
Let us also introduce the spin-connection 1-form $\omega$,
related to the Ricci (1,1)-form ${\rm Ric}(g)$
as follows 
\begin{align}\nonumber
{\rm Ric}(g)=R_{z\bz}\, idz\wedge d\bz
=(\p_z\omega_{\bz}-\p_{\bz}\omega_z) dz\wedge d\bz,
\end{align}
and
\begin{align}\label{gauge1}
&\omega_z=\frac12i\p_z\log g_{z\bz},\quad\omega_{\bz}
=-\frac12 i\p_{\bz}\log g_{z\bz}.
\end{align}
The scalar curvature $R$ is given by
\begin{equation}\label{def1}
R=2g^{z\bz}R_{z\bz}=-2g^{z\bz}\p_z\p_{\bz}\log\sqrt g
=-\Delta_g\log\sqrt g,
\end{equation}
where $\sqrt g=2g_{z\bz}$ and the scalar Laplacian is
$\Delta_g=2g^{z\bz}\p_z\p_{\bz}$.
We use the standard normalization for the scalar curvature,
so that the Euler characteristic of $\Sigma$ is given by
$\chi(\Sigma)=\frac1{4\pi}\int_\Sigma R\sqrt gd^2z$.

The LLL wave functions are in one-to-one correspondence with
the holomorphic sections of the
line bundle. In our case the line bundle will be the tensor product
${\rm L}=L^k\otimes K^s$,
where $K$ is canonical line bundle. The Hermitian metric on
the sections of
$K^s$ is $g_{z\bz}^{-s}$ and its curvature form is $is{\rm Ric}(g)$.
The holomorphic sections are solutions of the $\bp$-equation
\begin{equation}
\label{dbareq}
\bp_{\rm L} s(z)=0,
\end{equation}
where
the $\bp$-operator acts as follows
\begin{equation}\label{acts}
\bp_{\rm L}: \mathcal C^\infty(\Sigma,L^k\otimes K^s)
\to\Omega^{0,1}(\Sigma,L^k\otimes K^s)
\end{equation}
from $\mathcal C^\infty$ sections of $L^k\otimes K^s$ to
$(0,1)$ forms with coefficients in $L^k\otimes K^s$. 
For background on the $\bp$-equation
and holomorphic sections we refer to Ref.\ \cite{MM}.
Solutions of Eq.\ \eqref{dbareq} form a vector space
$H^0(\Sigma,L^k\otimes K^s)$.
The holomorphic Euler characteristic of $L^k\otimes K^s$
is calculated by the Riemann-Roch theorem
\cite[p.\,245-6]{GH:78}, \cite[Theorem 1.4.6]{MM},
\[
\chi(L^k\otimes K^s):=\dim H^0(\Sigma,L^k\otimes K^s)-
\dim H^1(\Sigma,L^k\otimes K^s)
=\deg(L^k\otimes K^s)+1-\mathrm{g}\,.
\]
Recall that $\deg K =2(\mathrm{g}-1)$
and by our assumption \eqref{e:flux},
$\deg L=\int_{\Sigma}c_1(L)=1$.
Note that a holomorphic line bundle $L$ is positive on $\Sigma$
if and only if $\deg L>0$, cf.\ \cite[p.\,214]{GH:78}.
On the other hand, by the Kodaira vanishing theorem,
$H^1(\Sigma,L^k\otimes K^s)=0$ if
$\deg L^k\otimes K^s>\deg K$,
see \cite[p.\,215]{GH:78}, equivalently,
if $k+ 2 (\mathrm{g}-1)(s-1) >0$.
Hence for $s\in\frac12\mathbb Z$ and
$k>2 (1-\mathrm{g})(s-1)$ we have
\begin{equation}\label{RR1}
N_k=\dim H^0(\Sigma,L^k\otimes K^s)=k+(1-{\rm g})(1-2s)\,.
\end{equation}
In this paper we assume that the magnetic field flux $k\gg1$,
so that the number of states is always positive and large.

In the symmetric gauge \eqref{gauge2}, \eqref{gauge1},
the LLL wave functions solve a closely related equation
\begin{equation}\label{dbar}
\bar D\psi_j=0,
\end{equation}
where the locally-defined operator $\bar D$ is nothing 
but the global operator $\bp_{\rm L}$ in \eqref{dbareq}, 
written in the chosen gauge. Namely
\begin{equation}\label{barD}
\bar D=  h^{\frac k2} g_{z\bz}^{-\frac{s+1}2}
i\bp_L\bigl( h^{-\frac k2} g_{z\bz}^{\frac s2}\; \cdot\bigr)
= g_{z\bz}^{-\frac12} (i\p_{\bz}-s\omega_{\bz}+A_{\bz}),
\end{equation}
The half-integer $s$ is called the spin. Consequently,
Eq.\ \eqref{dbar} can be formally solved by twisting the basis
in $H^0(\Sigma,L^k\otimes K^s)$ as follows
\begin{equation}\label{basiswf}
\mathcal\psi_j(z,\bz)= s_j(z) h^{\frac k2}g_{z\bz}^{-\frac s2},
\end{equation}
where $s_j(z)$ is a holomorphic section of $L^k\otimes K^s$,
so the number of solutions of \eqref{barD} is the same as
in Eq.\ \eqref{RR1}.

Norm squared of a section at a point $z\in\Sigma$ is defined
with the help of the Hermitian metric, and is given by
\begin{equation}\label{hernorm}
\|s(z)\|^2=|s(z)|^2h^k(z,\bz)g_{z\bz}^{-s}(z,\bz),
\end{equation}
Now $\|s(z)\|^2$ is a scalar function on $\Sigma$, which can be
integrated. More generally, the $L^2$ inner product of sections
reads
\begin{equation}\label{inner}
\langle s_1,s_2\rangle=\frac1{2\pi}\int_\Sigma \bs_1s_2\,
h^kg_{z\bz}^{-s}\sqrt gd^2z.
\end{equation}
This immediately translates into the standard quantum-mechanical
inner product for the  wave functions \eqref{basiswf},
\begin{equation}\nonumber
\langle\psi_1|\psi_2\rangle
=\frac1{2\pi}\int_\Sigma \psi_1^*\psi_2\sqrt gd^2z.
\end{equation}
Now we are ready to define the generating functional in the
integer Quantum Hall.

\subsection{Generating functional}
\label{sec22}

Consider a non-normalized basis $\{s_j\}$ of
$H^0(\Sigma,L^k\otimes K^s)$ (later it will be important that 
the sections $s_j$ depend on $Y$ holomorphically) and
the corresponding basis of wave functions \eqref{basiswf}.
Then there exists a section of the determinant line bundle
$(L^k\otimes K^s)_{z_1}\wedge\cdots\wedge
(L^k\otimes K^s)_{z_{N_k}}$, 
associated to the basis $s_j$.
(More precisely, the line bundle is
$\det (\oplus_j \pi^*_j (L^k\otimes K^s))$ on $\Sigma^{N_k}$,
where $\pi_j$ is the projection on the $j$-th factor of
$\Sigma^{N_k}$.)
This section is given by the completely antisymmetric combination
$s_1\wedge\ldots\wedge s_{N_k}$ and can be formally written as
a Slater determinant
\begin{align}
\label{slater}
\frac1{N_k!}\det \bigl[s_j(z_l)\bigr]_{j,l=1}^{N_k}.
\end{align}
This is a collective wave function of free electrons on LLL,
which we refer to as the integer QH-state. Physically it means
that the lowest Landau level is completely filled, i.e., we have
$N_k$ particles (fermions) at the positions $z_1,\ldots, z_{N_k}$.
The Hermitian norm of the section \eqref{slater} is induced by
the point-wise norm \eqref{hernorm} for each $s_j$ and reads
\begin{equation}\nonumber
|\Psi(z_1,\ldots,z_{N_k})|^2:=\frac1{N_k!}|\det s_j(z_l)|^2
\prod_{j=1}^{N_k}\,h^k(z_j,\bz_j)g_{z\bz}^{-s}(z_j,\bz_j).
\end{equation}
This is a scalar function on $\Sigma^{N_k}$, which defines
a measure on configuration of $N_k$ points \eqref{meas}.
The partition function \eqref{Z} is
\begin{align}\label{genfunc}
Z_k=\frac1{(2\pi)^{N_k}N_k!}\int_{\Sigma^{N_k}}|\det s_j(z_l)|^2
\prod_{j=1}^{N_k} h^k(z_j,\bz_j) g_{z\bz}^{-s}(z_j,\bz_j)
\sqrt gd^2z_j,
\end{align}
and $\log Z_k$ is called the generating functional.

Another useful representation of the partition function is in terms
of the determinant of the Gram matrix of inner products
\eqref{inner} of sections
\begin{equation}\label{detformula}
Z_k=\det\frac1{2\pi} \int_\Sigma \bs_j(\bz)s_l(z) 
h^kg_{z\bz}^{-s}\sqrt gd^2z=\det \langle s_j,s_l\rangle.
\end{equation}
This representation is called the determinantal formula. From this
representation it  follows that the integral \eqref{genfunc}
converges, since the sections are $L^2$-integrable \eqref{inner}.

By construction $Z_k$ depends on the choice of the basis in
$H^0(\Sigma,L^k\otimes K^s)$. However, this dependence
is not hard to take care of. Note that under the linear change
of the basis of sections $s_i\to s'_i=A_{ij}s_j$ the generating
functional transforms as
\begin{align}\label{detaa}\nonumber
&s_i\to s'_i=A_{ij}s_j,\quad A\in GL(N_k,\mathbb C)\\
&Z_k\to\det (A^*A)\cdot Z_k.
\end{align}
Since  $A_{ij}$ is a numerical matrix, the derivatives of $\log Z_k$
are $A$-independent, hence they are independent of the choice of
the basis of sections.
Consider now some reference metrics $h_0^k$ on $L^k$ and
$g_{0}$ on $\Sigma$, connected with $h^k$ and $g$ as
\begin{align}\label{kclass}
&h^k(z,\bz)=h_0^k(z,\bz)e^{-k\psi(z,\bz)}\\
&g_{z\bz}=g_{0z\bz}+\p_z\p_{\bz}\,\phi(z,\bz),\label{kclass1}
\end{align}
where the $\mathcal C^{\infty}$ scalar functions $\psi(z,\bz)$
and $\phi(z,\bz)$ are known as  the \kahler potential and the
magnetic potential, respectively. In particular, Eq.\ \eqref{kclass}
translates into
\begin{equation}\label{def2}
F_{z\bz}=F_{0z\bz}+k\p_z\p_{\bz}\psi,
\end{equation}
Since the metric and the magnetic field are positive,
$g_{z\bz}>0,\,F>0$, the functions $\psi(z,\bz)$ and
$\phi(z,\bz)$ satisfy
$\p_z\p_{\bz}\phi(z,\bz)>-g_{0z\bz}$ and $\p_z\p_{\bz}\psi>
-F_{0z\bz}/k$.

Consider now the generating functional $\log Z_{k0}$, defined
using the reference metrics. It follows from the transformation rule,
Eq.\ \eqref{detaa}, that the difference
\begin{equation}\label{Z1}
\log Z_k-\log Z_{k0}
\end{equation}
is independent of the choice of the basis in
$H^0(\Sigma,L^k\otimes K^s)$.

Finally, let us explain why $\log Z_k$ is called the generating
functional.
Writing $h^k=h_0^k e^{-k\psi}$,  it is straightforward to see that
the variational derivatives of \eqref{Z} with respect to $\psi$
\begin{equation}\label{dens}
\frac{\delta}{\delta\psi(w_1,\bar w_1)}\cdots
\frac{\delta}{\delta\psi(w_m,\bar w_m)}\log Z_k
=(-k)^m\big\langle\rho(w_1,\bar w_1)\ldots\rho(w_m,\bar w_m)
\big\rangle_c
\end{equation}
give the connected multi-point correlation functions of electronic
density operator
$\rho(z,\bz)=\sum_{j=1}^{N_k} \delta(z,z_j)$.
These density correlation functions were considered
in \cite{ZW,CLW,CLW1}.

\subsection{Determinant line bundle and Quillen metric}
\label{Y}

In \cite{Q} Quillen defined the metric on determinant line bundle,
using regularized determinants of the Cauchy-Riemann operators
on Riemann surfaces.
It was used for the computation of holomorphic anomalies in
string theory amplitudes by Belavin-Knizhnik \cite{BK1}.
More generally, the curvature formulas for
 were derived for determinant lines of
Dirac operators by Bismut-Freed \cite[Theorem\,1.21]{BF}
and for \kahler manifolds
by Bismut-Gillet-Soul\'e \cite[Theorem\,1.9]{BGS}.
For the relevant literature see also
the following physics \cite{AMV,VV,DP} and
mathematical papers \cite{BJ,ZT}.
In the context of the integer quantum Hall effect on
Riemann surfaces the Quillen metric was first discussed in
Refs.\ \cite{ASZ,L2}, see also \cite{KW} for the fractional case.

In Appendix A we give a general overview of the Quillen metric and 
analytic torsion, and here we consider the Riemann surface case,
which is of main relevance to this paper.

Consider again the $\bp_L$ operator \eqref{dbareq}, which acts
from the vector space of sections to the vector space of
$(0,1)$-forms \eqref{acts}. The adjoint operator
$\bp_L^*$ relative to the inner
product \eqref{inner} acts in the opposite direction.
Hence the (Kodaira) Laplacian can be defined as
\begin{equation}\label{DeltaL}
\Delta_{\rm L}=\bp_L^*\bp_L^{\phantom{a}}:\;
\mathcal C^{\infty}(\Sigma,L^k\otimes K^s)
\to\mathcal C^{\infty}(\Sigma,L^k\otimes K^s).
\end{equation}
It will be convenient to work with
its Hermitian-equivalent. Namely, we consider the operator $\bar D$
as defined in \eqref{barD}. The adjoint operator $\bar D^*$ reads
\begin{align}\nonumber
\bar D^*=g_{z\bz}^{-\frac12}(i\p_z-(s-1)\omega_z+A_z).
\end{align}
Hence we define the Laplacian $\Delta_s^-$ as
\begin{equation}\label{DeltaS}
\Delta_s^-= \bar D^*\bar D, \quad\quad \Delta_s^-:\;
\mathcal C^{\infty}(\Sigma,L^k\otimes K^s)\to
\mathcal C^{\infty}(\Sigma,L^k\otimes K^s).
\end{equation}
We will also need to define the Laplacian with the operators
above interchanged
\begin{equation}\nonumber
\Delta_s^+= \bar D\bar D^*, \quad\quad \Delta_s^+:\;
\mathcal C^{\infty}(\Sigma,L^k\otimes K^s)\to
\mathcal C^{\infty}(\Sigma,L^k\otimes K^s).
\end{equation}
In particular, setting $s=k=0$  both Laplacians reduce to
the usual scalar Laplacian.
Since $\ker \bar D^*=0$, it follows from
Riemann-Roch theorem \eqref{RR1} that in our case
\begin{equation}
\nonumber
\dim\ker\Delta_s^-=\dim\ker\bar D=k+(1-{\rm g})(1-2s).
\end{equation}

The definition the regularized spectral determinants is standard.
Given non-zero eigenvalues ${\lambda}$ of $\Delta_{\rm L}$,
the regularized spectral determinant of the Laplacian is defined
as $\det'\Delta_{\rm L}=\exp(-\zeta'(0))$, where the zeta function
is $\zeta(u)=\sum_\lambda\lambda^{-u}$. In view of \eqref{barD},
the spectral determinants of both Laplacians \eqref{DeltaL}
and \eqref{DeltaS} are equal,
$\det'\Delta_s^-=\det'\Delta_{\rm L}$.
Now we would like to study the dependence of
the QH-state \eqref{slater} and the partition function $Z_k$
on the parameter space
$Y=\mathcal M_{\rm g}(\Sigma)\times \mathit{Jac}(\Sigma)$
\eqref{YY}.
Recall, that $\mathcal M_{\rm g}(\Sigma)$ is the moduli space 
of complex structures on $\Sigma$ and $\mathit{Jac}(\Sigma)$ is
the moduli space of flat connections $dA=0$  on
$\Sigma$ (Jacobian variety). Given the canonical basis
of one-cycles $(A_a,B_b)\in H_1(\Sigma,\mathbb Z),\;a,b=1,
\ldots,\rm g$  in $\Sigma$ and the dual basis of harmonic one-forms
$\alpha_a,\beta_b\in H^1(\Sigma,\mathbb Z)$, we can parameterize
the flat connections as follows,
\begin{equation}\label{flatcon}
A^{\rm flat}=2\pi\sum_{a=1}^{\rm g}(\varphi_1^a\alpha_a
-\varphi_2^a\beta_a).
\end{equation}
Here $(\varphi_1^a,\varphi_2^b)\in[0,1]^{2{\rm g}}$ are
coordinates on the Jacobian, which is $2{\rm g}$-dimensional
torus for the surfaces of genus ${\rm g}$.
This space corresponds to Aharonov-Bohm solenoid fluxes
piercing the handles of the surface.

The Slater determinant \eqref{slater} is a section $\mathcal S$
of the determinant line bundle $\mathcal L$ over the
parameter space, $\mathcal L=\det H^0(\Sigma, L^k\otimes K^s)$.
For the basis $\{s_j\}$ of
$H^0(\Sigma,L^k\otimes K^s)$ \eqref{basiswf},
the Quillen metric of  $\mathcal S$ of $\mathcal L$ is given by
\begin{equation}
\label{detsec}
\|\mathcal S\|^2
=\frac{\det \langle s_j,s_l\rangle}{\det'\Delta_{\rm L}}
=\frac{Z_k}{\det'\Delta_{\rm L}}\,,
\end{equation}
where the last equality follows from determinant formula
\eqref{detformula}.

\example The metric of the round sphere reads
$g_{0z\bz}=\frac1{(1+|z|^2)^2}$, where the complex coordinate
$z\in\mathbb C$, and the constant magnetic field $B=k$
corresponds to the hermitian metric on $L^k=\mathcal O(k)$
\begin{equation}\label{sphereherm}
h_0^k(z)=\frac1{(1+|z|^2)^k}.
\end{equation}
We can choose a basis of holomorphic sections in the form
$s_j(z)=c_jz^{j-1}(dz)^s,\,j=1,...,k-2s+1$, where $c_j$
is a normalization constant. The partition function then reads
\begin{align}\label{s2}
Z_{k}^{S^2}=\frac1{N_k!}\prod_j\frac{c_j^2}{2\pi}
\int_{\mathbb C^{N_k}}\prod_{j<l}|z_j-z_l|^2 \cdot
e^{-k\sum_j\psi(z_j)-s\sum_j\log\frac{\sqrt g}{\sqrt{g_0}}|_{z_j}}
\prod_{j=1}^{k-2s+1} \frac{d^2z_j}{(1+|z_j|^2)^{k-2s+2}},
\end{align}
where we used the formula $\det z_l^{j-1}=\prod_{j<l}(z_j-z_l)$
for the Vandermonde determinant. On the sphere there is no
complex structure moduli and $H^1(S^2,\mathbb Z)$ is trivial,
so the partition function is only a functional of the potential
functions $\phi,\psi$.

\example On the flat torus $T^2=\mathbb C/\Lambda,\;
\Lambda=m+n\tau,\,m,n\in\mathbb Z$ the metric reads
$g_{0z\bz}=\frac{2\pi i}{\tau-\bar\tau}\,\cdot$
We define a holomorphic line bundle $L\to T^2$ as follows.
Let $\Phi$ be the action of $\Lambda$ on $\C\times \C$
given by $\Phi(1)(z,\xi)=(z+1,\xi)$,
$\Phi(\tau)(z,\xi)=(z+\tau,e^{-2\pi i z -\pi i\tau}\xi)$.
We define $L=\C\times\C/\Lambda$, the quotient space
by this action, which is a holomorphic line bundle on 
$\C/\Lambda$ of degree one.
Let $L_\varphi=\C\times\C/\Lambda$ with the $\Lambda$  action
given by $\Phi_{\varphi}(1)(z,\xi)=(z+1,\xi)$,
$\Phi_{\varphi}(\tau)(z,\xi)=(z+\tau,
e^{- 2\pi i\varphi}\xi)$.
Constant magnetic field corresponds to the hermitian metric 
on $L^k\otimes L_\varphi$,
\begin{equation}\label{hermtorus}
h_0^k(z,\bz|\varphi,\bar\varphi)
=e^{\frac{\pi i k}{\tau-\bar\tau}(z-\bz)^2+\frac{2\pi i}
{\tau-\bar\tau}(z-\bz)(\varphi-\bar\varphi)},\:\:z\in\C.
\end{equation}
In this case we have a moduli parameter
$\tau\in\mathbb H/PSL(2,\mathbb Z)$ and also Jacobian variety
comes into play. We have two 1-cycles and correspondingly two
dual harmonic one-forms $\alpha^1,\beta^1$, and the gauge
connection \eqref{flatcon} has the form
\begin{equation}\label{flattorus}
A^{\rm flat}=2\pi(\varphi_1\alpha^1-\varphi_2\beta^1).
\end{equation}
The complex coordinate on the Jacobian is defined as
$\varphi=\varphi_2+\varphi_1\tau$. The basis of sections can
be written in terms of the theta functions with characteristics.
Also the canonical bundle on the torus is trivial, so we can set
$s=0$. Using standard notations for the theta functions \cite{M}
we can choose the basis of $H^0(\C/\Lambda,L^k\otimes L_\varphi)$ as
\begin{equation}\label{e:basis}
s_j(z)=\vartheta_{\frac jk,0}(kz+\varphi,k\tau),\,j=1,\ldots k.
\end{equation}
Using the $L^2$ norm of the section 
\begin{equation}\nonumber
\frac1{2\pi}\int_{T^2}\bs_j(\bz)s_l(z)h_0^k\sqrt{g_0}d^2z
=\sqrt{\frac i{k(\tau-\bar\tau)}}\cdot e^{-\frac{\pi i}{k}
\frac{(\varphi-\bar\varphi)^2}{\tau-\bar\tau}}\delta_{jl},
\end{equation}
we can find the partition function $Z_k$ for the flat torus and 
constant magnetic field
\begin{align}\label{torusZ}
Z_{k}^{T^2}(\tau,\bar\tau,\varphi,\bar\varphi)
=\frac{1}{k^{k/2}(2{\rm Im}\,\tau)^{k/2}}\cdot 
e^{-\pi i\frac{(\varphi-\bar\varphi)^2}{\tau-\bar\tau}}.
\end{align}
In this form the dependence of the partition function on
the parameter space coordinate $y=(\tau,\varphi)$ is explicit 
(recall that $Z_k$ is defined up to multiplication by mod squared 
of non-vanishing holomorphic function of $\tau$).

\section{Large $k$ asymptotics of the generating functional}
\label{sec3}

\subsection{Proof of Theorem \ref{prop1}}
\label{sec31}

The following is a more detailed formulation of the
Theorem \ref{prop1}. Here we write the formula Eq.\ \eqref{ano}
for the generation functional in terms of the \kahler and magnetic
potentials. This formulation is equivalent to Eq.\ \eqref{freeintro}
up to integration by parts. 
\begin{mythm}{\bf 1.}\label{thm1}
Consider the difference of generating functionals
$\log Z_k-\log Z_{k0}$ \eqref{Z1} for the metrics $h_0^k,h^k$ on
$L^k$ and $g_0,g$ on $\Sigma$, as defined in
Eqs.\ (\ref{kclass}, \ref{kclass1}).
The following asymptotic expansion holds at large $k$,
\begin{equation}\label{thm1exp}
\log \frac{Z_k}{Z_{k0}}=\log \frac{Z_H}{Z_{H0}}
+\mathcal F-\mathcal F_0,
\end{equation}
where the non-local (anomalous) part $\log Z_H$ is given by
\eqref{freeintro}. The difference $\log Z_H-\log Z_{H0}$ consists
of three leading orders in $k$
\begin{equation}\label{ano}
\log \frac{Z_H}{Z_{H0}}=-k^2S_2(\psi)
+k\frac{(1-2s)}2 S_1(\phi,\psi)
+\left(\frac{(1-2s)^2}4-\frac1{12}\right)S_L(\phi).
\end{equation}
where $S_2,\,S_1,\,S_L$ are geometric functionals. They read
\begin{align}\label{func1}
&S_2(\psi)=\frac1{2\pi}\int_\Sigma\left(\frac14\psi\Delta_0\psi
+\frac1k B_0\psi\right)\sqrt{g_0}d^2z,\\\label{func2}
&S_1(\phi,\psi)=\frac1{2\pi}\int_\Sigma\left(-\frac12\psi R_0
+\bigl(\frac1k B_0+\frac12\Delta_0\psi\bigr)
\log\bigl(1+\frac12\Delta_0\phi\bigr)\right)\sqrt{g_0}d^2z,\\
\label{func3}
&S_L(\phi)=\frac1{2\pi}\int_\Sigma\left(-\frac14\log\bigl(1
+\frac12\Delta_0\phi\bigr)\,\Delta_0\log\bigl(1
+\frac12\Delta_0\phi\bigr)+\frac12R_0\log\bigl(1
+\frac12\Delta_0\phi\bigr)\right)\sqrt{g_0}d^2z.
\end{align}
Here $\Delta_0$ is the scalar Laplacian in the reference metric
$g_0$ and $B_0=g_0^{z\bz}F_{0z\bz}$ is
the reference magnetic field.

The first terms of the expansion of the exact part $\mathcal F$
in \eqref{thm1exp} are given by
\begin{equation}\label{exact1}
\mathcal F=-\frac1{2\pi}\int_\Sigma\left[\frac12B\log \frac B{2\pi}
+\frac{2-3s}{12}R\log \frac B{2\pi}
+\frac1{24}(\log B)\Delta_g(\log B)\right]
\sqrt gd^2z+\mathcal O(1/k).
\end{equation}
\end{mythm}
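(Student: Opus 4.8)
The plan is to reduce the statement to the known large-$k$ asymptotic expansion of the Bergman kernel on the diagonal via a variational argument. The starting point is the determinantal formula \eqref{detformula}, $\log Z_k=\log\det G$ with Gram matrix $G_{jl}=\langle s_j,s_l\rangle$. Differentiating with Jacobi's identity $\delta\log\det G=\Tr(G^{-1}\delta G)$ and orthonormalising via $G^{-1}$ so that $\Tr(G^{-1}\delta G)$ is rewritten through the Bergman kernel density on the diagonal $B_k$ (for an orthonormal basis, $B_k=\sum_j|s_j'|^2h^kg_{z\bz}^{-s}$, cf.\ \eqref{B5}, which satisfies $\tfrac1{2\pi}\int_\Sigma B_k\sqrt g\,d^2z=N_k$), one obtains the variational formula \eqref{freeen}. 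Under $h^k=h_0^ke^{-k\psi}$, $g_{z\bz}=g_{0z\bz}+\p_z\p_{\bz}\phi$ the variation of the density $\log(h^kg_{z\bz}^{-s}\sqrt g)$ is $-k\,\delta\psi+(1-s)\,\delta\log\sqrt g$ (with $\delta\log\sqrt g=\tfrac12\Delta_g\delta\phi$), the sections $s_j$ being held fixed since their choice drops out of the difference $\log Z_k-\log Z_{k0}$ by \eqref{detaa}; hence
\begin{equation}\nonumber
\delta\log Z_k=\frac1{2\pi}\int_\Sigma B_k(z)\bigl(-k\,\delta\psi+(1-s)\,\delta\log\sqrt g\bigr)\sqrt g\,d^2z.
\end{equation}
I would then join $(g_0,h_0^k)$ to $(g,h^k)$ along the segment $\psi_t=t\psi$, $\phi_t=t\phi$, $t\in[0,1]$, and write $\log\frac{Z_k}{Z_{k0}}=\int_0^1\frac{d}{dt}\log Z_{k,t}\,dt$ as a single integral of the intermediate Bergman density $B_{k,t}$ against $\psi$ and $\p_z\p_{\bz}\phi$.

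Next I would substitute the large-$k$ expansion $B_{k,t}=B_t+\tfrac{1-2s}4R_t+\tfrac14\Delta_{g_t}\log B_t+\mathcal O(1/k)$, together with the further terms recorded in \eqref{expansion} (the Tian--Yau--Zelditch--Catlin--Lu expansion for $L^k$, in the twisted form for $L^k\otimes K^s$ of Ma--Marinescu \cite{Z,C,MM,MM1}, reviewed in Appendix B), into the variational formula and carry out the $t$-integration, organising the resulting terms by their weight in $k$ (with $B_t\sim k$ while $R_t$, $\omega_z$, $\Delta_{g_t}$ of order $1$). The three highest-weight contributions — of orders $k^2$, $k$, $k^0$, coming from $B_t$, from $\tfrac{1-2s}4R_t$ together with the relevant pieces of $\tfrac14\Delta_{g_t}\log B_t$, and from the next coefficient in \eqref{expansion} — are quadratic in the potentials; after integration by parts and insertion of $A_z=A_{0z}-\tfrac{ik}2\p_z\psi$, $\omega_z=\omega_{0z}+\tfrac i2\p_z\log(1+\tfrac12\Delta_0\phi)$ they reassemble into the quadratic anomaly functional \eqref{freeintro}, equivalently into $-k^2S_2(\psi)+k\tfrac{1-2s}2S_1(\phi,\psi)+\bigl(\tfrac{(1-2s)^2}4-\tfrac1{12}\bigr)S_L(\phi)$; this is the non-local part $\log Z_H$. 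The remaining contributions are local functionals of $B$ and $R$; a direct computation identifies their first three terms with \eqref{exact1}, and the $\mathcal O(1/k)$ remainder of the Bergman expansion produces the $\mathcal O(1/k)$ error.

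The main obstacle is the bookkeeping of the $t$-integration: the $\phi$-dependence enters both through $\Delta_{g_t}$ and through $\log B_t$, both nonlinear in $t$, so several terms are not total $t$-derivatives and must be reorganised by repeated integration by parts; and the weight-counting must be kept honest, since $1/B_t\sim1/k$ means that a formally subleading Bergman coefficient can feed a leading term. The most delicate point is pinning down the coefficients of the weight-zero functional $S_L$ — in particular the $-\tfrac1{12}$ in front of $\omega_z\omega_{\bz}$ — which requires the precise order-$1/k$ and order-$1/k^2$ coefficients of \eqref{expansion} combined with the $(1-s)\,\delta\log\sqrt g$ piece of the measure variation. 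A useful consistency check throughout is that $\log Z_k-\log Z_{k0}$ depends only on the endpoints of the path, which constrains the reorganised expressions; specialising to $s=0$ and constant $B=k$ recovers the formula of \cite{K}, and comparison with the exact torus partition function \eqref{torusZ} checks the overall normalisation.
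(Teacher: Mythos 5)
Your proposal is correct and follows essentially the paper's own route: the determinantal/variational formula \eqref{freeen}, substitution of the Bergman kernel expansion \eqref{expansion} (keeping the order-$1/k$ coefficients, since $k\,\delta\psi$ promotes them to order one), and integration back to the functionals $S_2,S_1,S_L$ plus the local terms of $\mathcal F$. The only difference is presentational: instead of integrating along the path $(t\phi,t\psi)$ with the attendant $t$-bookkeeping, the paper verifies directly that the rearranged variation \eqref{term} is exact, namely it coincides with $\delta S_2$, $\delta S_1$, $\delta S_L$ and the variations of $\int B\log B$, $\int R\log B$, $\int(\log B)\Delta_g\log B$ in \eqref{varfor1}--\eqref{logBlogB}, and then integrates using the boundary condition at $\phi=\psi=0$, which removes the path-dependence worries you raise.
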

\begin{proof}
We begin with the variational formula for the generating functional
$Z_k$, starting from the determinant formula \eqref{detformula}.
Denoting $G_{jl}=\langle s_j,s_l\rangle$, we get
\begin{align}\label{freeen}\nonumber
\delta \log Z_k&=\delta\,\tr\log \langle s_j,s_l\rangle
=-\frac1{2\pi}\sum_{j,l} G^{-1}_{lj}
\int_\Sigma\left(\frac{s-1}2(\Delta_g\delta\phi)
+k\delta\psi\right)\bs_js_lh^kg_{z\bz}^{-s}\sqrt gd^2z\\&
=-\frac1{2\pi}\int_\Sigma \left(\frac{s-1}2(\Delta_g B_k(z,\bz))\,
\delta\phi+kB_k(z,\bz)\,\delta\psi\right)\sqrt gd^2z,
\end{align}
where $B_k(z,\bz)$ is the Bergman kernel on
the diagonal for the line bundle $L^k\otimes K^s$.
Once we know the large $k$ asymptotic expansion of
the Bergman kernel on the diagonal we can determine the
asymptotic expansion of the generating functional.
The asymptotic expansion of the Bergman kernel is reviewed
in the Appendix B. There it is shown (see Eq.\ \eqref{eq:mm3.11})
that as $k\to\infty$,
\begin{equation}\label{expansion}
B_k(z,\bz)=B+\frac{1-2s}4R+\frac14\Delta_g\log B+
\frac{2-3s}{24}\Delta_g(B^{-1}R)+\frac1{24}\Delta_g(B^{-1}
\Delta_g\log B)
+\mathcal O(1/k^2)\,.
\end{equation}
Let us comment on the orders of different terms in this expansion.
The magnetic field $B$ is assumed to be of order $B\sim k\gg1$,
meaning $B(z,\bz)/k$ is a smooth positive function on $\Sigma$, 
independent of $k$. 
Therefore the first term here is of order $k$, the second and third
terms have order one and the last two terms have order $1/k$.
The inverse of $B$ is well-defined
due to the positivity condition $B>0$.
Note, that the expansion above is a gradient expansion,
in terms of derivatives
of the Riemannian metric and magnetic field.

Next we substitute the asymptotic expansion \eqref{expansion}
to  the variational formula \eqref{freeen} and rearrange
the terms as follows
\begin{align}\nonumber\label{term}
\delta \log Z_k=&-\frac1{2\pi}\int_\Sigma kB\,\delta\psi
\sqrt gd^2z+k\frac{1-2s}{8\pi}\int_\Sigma
\left(-R\,\delta\psi+\frac1k(\Delta_g B)\,\delta\phi\right)
\sqrt gd^2z\\\nonumber
&-\frac1{8\pi}\int_\Sigma\bigl(k(\Delta_g\log B)\,
\delta\psi-(\Delta_gB)\,\delta\phi\bigr)\sqrt gd^2z\\\nonumber&
+\left(\frac{(1-2s)^2}4-\frac1{12}\right)\frac1{8\pi}
\int_\Sigma (\Delta_g R)\,\delta \phi\sqrt gd^2z
\\\nonumber
&-\frac{2-3s}{48\pi}\int_\Sigma\left(k\Delta_g(B^{-1}R)\,
\delta\psi-(\Delta_g^2\log B)\,\delta\phi-(\Delta_gR)\,
\delta\phi\right)\sqrt gd^2z\\&
-\frac1{48\pi}\int_\Sigma\left( k\Delta_g(B^{-1}\Delta_g\log B)\,
\delta\psi-(\Delta_g^2\log B)\,\delta\phi\right)\sqrt gd^2z.
\end{align}
Writing out the definitions \eqref{def1}, \eqref{def2} explicitly,
\begin{equation}\nonumber
B=g^{z\bz}F_{z\bz}=\frac{B_0+\frac12k\Delta_0\psi}
{1+\frac12\Delta_0\phi},
\quad R=-\Delta_g\log\sqrt g
=\frac{R_0-\Delta_0\log(1+\frac12\Delta_0\phi)}
{1+\frac12\Delta_0\phi},
\end{equation}
(note that $B_0/k$ is independent of $k$) one can directly
verify the following variational formulas
for the geometric functionals \eqref{func1}-\eqref{func3},
\begin{align}\label{varfor1}
&\delta S_2=\frac1{2\pi }\int_\Sigma \frac1kB\delta\psi\,
\sqrt gd^2z,\\\label{varfor2}
&\delta S_1=\frac1{4\pi}
\int_\Sigma\left(-R\,\delta\psi+\frac1k(\Delta_g B)\,
\delta\phi\right)\sqrt
gd^2z,\\\label{varfor3}
&\delta S_L=\frac1{8\pi}\int_\Sigma (\Delta_g R)\,\delta\phi\,
\sqrt gd^2z,
\end{align}
and also
\begin{align}\label{BlogB}
&\delta\int_\Sigma B\log B\sqrt g d^2z
=\frac12\int_\Sigma\bigl(k(\Delta_g\log B)\,
\delta\psi-(\Delta_gB)\,\delta\phi\bigr)\sqrt gd^2z,\\\label{RlogB}
&\delta\int_\Sigma R\log B\sqrt g d^2z
=\frac12\int_\Sigma\left(k\bigl(\Delta_g(B^{-1}R)\bigr)\,
\delta\psi-(\Delta_g^2\log B)\,\delta\phi-(\Delta_gR)\,\delta\phi
\right)\sqrt gd^2z,\\\label{logBlogB}
&\delta\int_\Sigma(\log B)\Delta_g(\log B)\sqrt g d^2z
=\int_\Sigma\left(k\bigl(\Delta_g(B^{-1}\Delta_g\log B)\bigr)\,
\delta\psi-(\Delta_g^2\log B)\,\delta\phi\right)\sqrt gd^2z.
\end{align}
Now we can integrate Eq.\ \eqref{term} term by term.
In the first line of \eqref{term} we recognize the variations
of the functionals $S_2$ and $S_1$,
in the second line -- the variation of \eqref{BlogB},
in the third line -- the variation of $S_L$ \eqref{func3},
and the variations of \eqref{RlogB}, \eqref{logBlogB} in the last two
lines of \eqref{term}.

When integrating this formulas, we need to take into
account the obvious boundary condition: at $\phi=\psi=0$
we have $\log Z_k-\log Z_{k0}=0$. This completes the proof.
\end{proof}

\begin{rem}
Using Theorem \ref{prop11}, we interpret $\mathcal{F}$ as
$\log \det'\Delta_{\rm L}$. While we computed  $\mathcal F$
to the order $\mathcal O(1)$, in fact that all $1/k$-corrections
to $\log \det'\Delta_{\rm L}$
are given by local functionals. This follows from the general
arguments given in \cite[\S 5.5.4]{MM}. We also note that physics
proof of the locality
of $\mathcal F$ was also given in Ref.\ \cite[\S 5]{FK}.

Integrating the left hand side of \eqref{expansion} over
the surface, we should get the total number of LLL states. We have
\begin{equation}\nonumber
N_k=\frac1{2\pi}\int_\Sigma B_k(z,\bar z)\sqrt gd^2z
=k+(1-{\rm g})(1-2s),
\end{equation}
in accordance with \eqref{RR1}. Indeed, only the first two terms
on the right hand side in \eqref{expansion} contribute to
the integral, while the rest of the expansion  is a 
total derivative.
Let us also mention that the Bergman kernel is the average density
of states
\begin{equation}\nonumber
\langle\rho(z,\bz)\rangle=\frac1{2\pi}B_k(z,\bz),
\end{equation}
as follows from \eqref{freeen} and \eqref{dens}.

In the case of constant magnetic field $B=k$ and $s=0$
the expansion \eqref{ano}  is in
agreement\footnote{The Riemannian conventions for
the scalar curvature and Laplacian used here are different by
a factor 2 from \kahler conventions used in \cite{K}.
Then the expansion of the Bergman kernel in \cite{K}
follows from \eqref{expansion} after setting the magnetic field
to constant: $B=k$.} with the one obtained in \cite{K}.
In particular, functionals $S_1$ \eqref{func1} and
$S_2$ \eqref{func2} reduce to the combinations of the
Aubin-Yau and Mabuchi functionals in \kahler geometry.
The functional $S_L$ \eqref{func3} is the Liouville functional.
The exact terms in Eq.\ \eqref{exact1} are in agreement
with \cite[Eq.\ (130)]{CLW1}.

 As we have mentioned, the formula Eq.\ \eqref{freeintro} for
 the anomalous part $\log Z_H$ of the generating functional
 \begin{align}\nonumber
\log \frac{Z_H}{Z_{H0}}= &\frac2{\pi}\int_\Sigma\left[A_zA_{\bz}
+\frac{1-2s}2(A_z\omega_{\bz}+
\omega_zA_{\bz})+\left(\frac{(1-2s)^2}4-\frac1{12}\right)
\omega_z\omega_{\bz}\right]d^2z\\
&-(A\to A_0, \omega\to\omega_0)
\end{align}
follows from
\eqref{ano} by integration by parts. 
This formula is understood to be valid specifically in 
the symmetric gauge, where the coefficients of the 
gauge connection and spin connection 1-forms 
(\ref{gauge2}), (\ref{gauge1}).
This is a somewhat 
formal, but nevertheless useful way to represent 
Eq.\ \eqref{freeintro}, 
because it resembles Chern-Simons functional, which
will appear in \S \ref{CSform}.

We emphasize that
the anomalous part is a non-local functional of magnetic field
and the curvature, while the exact terms
are {\it local}, i.e., involve integrals of the
magnetic field density and scalar curvature and their derivatives
at a point. This can be illustrated
by another non-local representation for the anomalous
term via double integrals
\begin{align}\nonumber
\log \frac{Z_H}{Z_{H0}}=&-\frac1{2\pi} \int_{\Sigma\times\Sigma}
\left(B+\frac{1-2s}4R\right)|_z
\Delta_g^{-1}(z,z')\left(B+\frac{1-2s}4R\right)|_{z'}
\sqrt gd^2z\sqrt gd^2z'\\&\nonumber
+\frac1{96\pi}\int_{\Sigma\times\Sigma}  
R|_z\Delta_g^{-1}(z,z')R|_{z'}\sqrt gd^2z\sqrt gd^2z' 
-(R\to R_0,B\to B_0),
\end{align}
where $\Delta_g^{-1}$ is the inverse Laplace operator.
This formula is an analog of the
Polyakov effective action in $2d$ gravity \cite{P1},
in the presence of the magnetic field $B$.
\end{rem}

\section{Regularized spectral determinant of the Laplacian}
\label{sec4}

In \S \ref{Y} we have seen that the norm squared
$\|\mathcal S\|^2$ of a section of the determinant line bundle
$\mathcal L$ on $Y$ is given by the ratio of the
generating functional $Z_k$ and regularized determinant of the
Laplacian for the line bundle \eqref{detsec}. Here we show that
the anomalous part of the generating functional \eqref{ano} is
encoded in the Quillen
metric \eqref{detsec}. The following is a more detailed formulation
of Theorem \ref{prop11}.
We will use the small time expansion
of the heat kernel taking advantage of the fact that this expansion
is easy to handle on Riemann surfaces. Thus we give a direct proof here,
avoiding the sophisticated local index arguments from
\cite[Theorem 1.23]{BGS}.
\begin{thm}
\label{thm2}
Consider the metrics $h_0^k,h^k$ on $L^k$ and $g_0,g$ on
$\Sigma$ related as in
Eqs.\ \eqref{kclass}, \eqref{kclass1}. The norm of the
determinant section $\|\mathcal S\|^2$
is defined  as in Eq.\ \eqref{detsec} for the metrics $g,h^k$,
and $\|\mathcal S\|_0^2$
for the metrics $g_0,h_0^k$. Then the following exact
transformation formula holds
\begin{align}\label{thm21}
\log\frac{\|\mathcal S\|^2}{\|\mathcal S\|^2_0}=-k^2S_2(\psi)
+k\frac{(1-2s)}2  S_1(\phi,\psi)+\left(\frac{(1-2s)^2}4
-\frac1{12}\right)S_L(\phi),
\end{align}
where the functionals are defined in
Eqs.\ \eqref{func1}-\eqref{func3}.
\end{thm}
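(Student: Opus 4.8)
\emph{Strategy.} The plan is to prove the infinitesimal version of \eqref{thm21} and integrate it along the affine path $u\mapsto(u\phi,u\psi)$, $u\in[0,1]$; this suffices because both sides of \eqref{thm21} vanish at $\phi=\psi=0$ ($S_2(0)=S_1(0,0)=S_L(0)=0$). By \eqref{detsec}, $\log\|\mathcal S\|^2=\log Z_k-\log\det'\Delta_{\rm L}$, and the variation of the first summand is already available from \eqref{freeen},
\begin{equation}\nonumber
\delta\log Z_k=-\frac1{2\pi}\int_\Sigma\Big(\tfrac{s-1}2(\Delta_g B_k)\,\delta\phi+kB_k\,\delta\psi\Big)\sqrt g\,d^2z,
\end{equation}
with $B_k(z,\bz)$ the Bergman kernel on the diagonal. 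So the task reduces to computing $\delta\log\det'\Delta_{\rm L}$ and showing that in the difference the Bergman-kernel terms organize into a purely local expression.

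\emph{Variation of the regularized determinant.} I would pass to the Hermitian-equivalent Laplacian $\Delta_s^-=\bar D^*\bar D$, so $\det'\Delta_{\rm L}=\det'\Delta_s^-=\exp(-\zeta'(0))$ with $\zeta(\sigma)=\Gamma(\sigma)^{-1}\int_0^\infty t^{\sigma-1}(\Tr e^{-t\Delta_s^-}-N_k)\,dt$. The crucial structural fact I would use is that $\ker\Delta_s^-=H^0(\Sigma,L^k\otimes K^s)$ does \emph{not} move under \eqref{kclass}, \eqref{kclass1}: the complex structure and the holomorphic line bundle are fixed and only the $L^2$-inner products change. Conjugating everything to a fixed Hilbert space by the positive multiplication operator $W$ that implements this change of inner product (a power of $1+\tfrac12\Delta_0\phi$ times the $e^{-k\psi}$-weight), the variation of the heat trace splits into the interior term $-t\,\Tr\big((\delta\widehat\Delta)e^{-t\widehat\Delta}\big)$ and a term accounting for the motion of the conjugated kernel $W^{1/2}H^0$, i.e.\ the trace against the variation of the Bergman projector, which reintroduces an integral of $B_k(z,\bz)$ against $\delta\phi,\delta\psi$. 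Reading off $\delta\zeta'(0)$ from the small-$t$ expansion $\Tr e^{-t\Delta_s^-}=a_{-1}t^{-1}+a_0+O(t)$ — elementary on a Riemann surface — the non-Bergman contribution to $\delta\log\det'\Delta_{\rm L}$ is the variation of the $t^0$-heat coefficient $a_0$, a classical local curvature integral obtained from Gilkey's formula for the generalized Laplacian $\Delta_s^-$ on $L^k\otimes K^s$, whose bundle curvature is $-iF+si\,{\rm Ric}(g)$, of the schematic shape $\tfrac1{2\pi}\int_\Sigma(c_1R+c_2B)\sqrt g\,d^2z$ with universal rational $c_1,c_2$.

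\emph{Assembling and identifying the functionals.} The key point to check is that, in $\delta\log\|\mathcal S\|^2=\delta\log Z_k-\delta\log\det'\Delta_{\rm L}$, the genuinely non-local (Bergman/spectral) pieces cancel — this is precisely the mechanism that makes $\|\mathcal S\|^2$ ``anomalous'', i.e.\ polynomial of degree two in $k$ rather than an infinite asymptotic series — leaving a local density in $B$, $R$, $\delta\phi$, $\delta\psi$. Substituting $B=(B_0+\tfrac k2\Delta_0\psi)/(1+\tfrac12\Delta_0\phi)$ and $R=(R_0-\Delta_0\log(1+\tfrac12\Delta_0\phi))/(1+\tfrac12\Delta_0\phi)$ and integrating by parts, I would then match this density term by term against
\begin{equation}\nonumber
\delta\Big[-k^2S_2(\psi)+k\tfrac{1-2s}2\,S_1(\phi,\psi)+\big(\tfrac{(1-2s)^2}4-\tfrac1{12}\big)S_L(\phi)\Big]
\end{equation}
using the variational identities \eqref{varfor1}, \eqref{varfor2}, \eqref{varfor3}: the $k^2$-term comes from the leading $kB$ part of $\delta\log Z_k$ and equals $-k^2\delta S_2$; the $k$-term assembles from the $R$- and $\Delta_gB$-pieces and equals $k\tfrac{1-2s}2\delta S_1$; and the $k^0$-term packages the rest, with the $-\tfrac1{12}$ emerging from the $a_0$-coefficient. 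Integrating in $u$ with the vanishing initial data yields \eqref{thm21}.

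\emph{Main obstacle.} The delicate step is the heat-kernel computation of $\delta\log\det'\Delta_{\rm L}$: tracking the moving zero-eigenspace so that the non-local $B_k$-terms cancel \emph{exactly} against those of \eqref{freeen}, and extracting the $t^0$-heat coefficient with the precise constants — in particular the $-\tfrac1{12}$ (the $\zeta(0)$/gravitational contribution) and the $k^2$ coefficient. This is the one-dimensional case of the Bismut--Gillet--Soul\'e anomaly formula \cite[Theorem\,1.23]{BGS}, reproved directly; on a curve it is tractable by elementary means because only the $t^{-1}$ and $t^0$ heat coefficients enter.
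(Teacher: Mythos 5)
Your proposal is correct and follows essentially the same route as the paper's proof: vary $\log\det'\Delta_{\rm L}$ through the heat-kernel representation, let the $t\to\infty$ boundary term (the zero modes are the holomorphic sections, which do not move under \eqref{kclass}, \eqref{kclass1}) reproduce exactly $-\delta\log Z_k$ so that the non-local Bergman-kernel pieces cancel, evaluate the $t\to0$ boundary with the short-time diagonal expansions, and match the resulting local densities with $\delta S_2$, $\delta S_1$, $\delta S_L$ before integrating from $\phi=\psi=0$. The only caveat is that the local contribution is the smeared $\epsilon^{0}$ heat-coefficient density of the two operators $\Delta_s^-$ and $\Delta_{s-1}^+$ (whose $1/\epsilon$ parts cancel), not literally the variation of the integrated coefficient $a_0$; the paper implements your conjugation bookkeeping by varying $\bar D$ and $\bar D^*$ directly and using $\tr \bar D\,e^{-t\Delta_s^-}\bar D^*=\tr \Delta_{s-1}^+e^{-t\Delta_{s-1}^+}$.
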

\begin{proof}

The derivation here is based on the standard heat kernel
technique, (see e.\ g.\  \cite[\S  IIF]{DP},
where the canonical line bundle $K^s$ was considered.)
{We begin} with the heat kernel representation of
the regularized determinant
\begin{equation}\nonumber
\log{\det}' \Delta_s^-=-\int_\epsilon^\infty\frac{dt}t
\bigl(\tr e^{-t\Delta_s^-}-N_k\bigr),
\end{equation}
where $\epsilon\to 0$. Now we compute the variation with
respect to the \kahler potential $\delta\phi$ and magnetic
potential $\delta\psi$
\begin{equation}
\label{var}
\delta\log{\det}' \Delta_s^-=\int_\epsilon^\infty\,dt\,
\tr \delta \Delta_s^-\,e^{-t\Delta_s^-}.
\end{equation}
The variations of the first-order operators are given by
\begin{align}\nonumber
&\delta\bar D=-\frac14\bigl[(s+1)\Delta_g\delta\phi
+ 2k\delta\psi\bigr]\bar D+\frac14\bar D
\bigl[s\Delta_g\delta\phi
+ 2k\delta\psi\bigr],\\
&\delta\bar D^*=\frac14\bigl[(s-2)\Delta_g\delta\phi
+ 2k\delta\psi\bigr]\bar D^*-\frac14\bar D^*\bigl[(s-1)
\Delta_g\delta\phi+ 2k\delta\psi\bigr].
\end{align}
Using the formulas above we derive
\begin{align}\nonumber
\tr \delta \Delta_s^-\,e^{-t\Delta_s^-}
=&\tr \bigl(\delta\bar D^* \bar D+\bar D^*\delta\bar D\bigr)\,
e^{-t\Delta_s^-}\\
\nonumber=&\tr \left(\frac14\bigl[(s-2)\Delta_g\delta\phi
+ 2k\delta\psi\bigr]\Delta_s^--\frac14\bar D^*\bigl[(s-1)
\Delta_g\delta\phi+ 2k\delta\psi\bigr]\bar D\right.\\\nonumber
&\left.-\frac14\bar D^*\bigl[(s+1)\Delta_g\delta\phi
+ 2k\delta\psi\bigr]\bar D
+\frac14\Delta_s^-\bigl[s\Delta_g\delta\phi+ 2k\delta\psi\bigr]
\right)\,e^{-t\Delta_s^-}\\
\nonumber=&\tr \left(\frac{s-1}2\Delta_g\delta\phi
+k\delta\psi\right)\Delta_s^-\,e^{-t\Delta_s^-}-
\tr\left(\frac s2\Delta_g\delta\phi+k\delta\psi\right)
\Delta_{s-1}^+\,e^{-t\Delta_{s-1}^+}.
\end{align}
Here in the middle terms we used the rearrangement identity
$\tr\bar D\,e^{-t\Delta_s^-}\bar D^*
=\tr\Delta_{s-1}^+e^{-t\Delta_{s-1}^+}$.
Going back to \eqref{var} and performing $t$-integral we obtain
\begin{equation}
\label{var11}
\delta\log{\det}' \Delta_s^-
=-\tr\left(\frac{s-1}2\Delta_g\delta\phi+k\delta\psi\right)\,
e^{-t\Delta_s^-}|_\epsilon^\infty
+\tr\left(\frac s2\Delta_g\delta\phi+k\delta\psi\right)\,
e^{-t\Delta_{s-1}^+}|_\epsilon^\infty.
\end{equation}
As $t\to\infty$ the first heat kernel above is projected onto
its zero modes \eqref{basiswf} and the second one vanishes,
since $\ker \Delta_s^+=0$. The $t=\infty$ part reduces to
\begin{align}
\label{free2}\nonumber
-\tr\left(\frac{s-1}2\Delta_g\delta\phi+k\delta\psi\right)\,
e^{-t\Delta_s^-}|_{t=\infty}
&=-\sum_{j=1}^{N_k}\left\langle s_j,\left(\frac{s-1}2\Delta_g
\delta\phi+k\delta\psi\right)s_j\right\rangle\\
&=-\delta\log\det\langle s_j,s_l\rangle,
\end{align}
where we recognize the variation formula \eqref{freeen}
for the generating functional in the determinantal representation
\eqref{detformula}. Using \eqref{free2}, Eq.\ \eqref{var11}
can be now written as
\begin{align}
\label{var1}
\delta\log\frac{\det\Delta_s^{-}}{Z_k}
=\tr \left(\frac{s-1}2\Delta_g\delta\phi+k\delta\psi\right)\,
e^{-\epsilon\Delta_s^-}\big|_{\epsilon\to0}
-\tr\left(\frac s2\Delta_g\delta\phi+k\delta\psi\right)\,
e^{-\epsilon\Delta_{s-1}^+}\big|_{\epsilon\to0}\,.
\end{align}
We evaluate these terms with the help of the short-time expansion
formulas for the heat kernel on the diagonal
\begin{align}
\label{heat}
&\langle z|e^{-\epsilon\Delta_s^-}|z\rangle=
\frac1{2\pi \epsilon}+\frac{B}{4\pi}+\frac{1-3s}{24\pi}R
+\mathcal O(\epsilon),\\
\nonumber&\langle z|e^{-\epsilon\Delta_{s-1}^+}|z\rangle
=\frac1{2\pi \epsilon}-\frac{B}{4\pi}+\frac{-2+3s}{24\pi}R
+\mathcal O(\epsilon)\,.
\end{align}
Plugging this in Eq.\ \eqref{var1} we note that the singular
$1/\epsilon$ terms cancel and we recognize the variational
formulas \eqref{varfor1}-\eqref{varfor3} for the geometric
functionals,
\begin{align}\nonumber
\delta\log\frac{\det'\Delta_s^{-}}{Z_k}=k^2\delta S_2(\psi)
-k\frac{(1-2s)}2 \delta S_1(\phi,\psi)-\left(\frac{(1-2s)^2}4
-\frac1{12}\right)\delta S_L(\phi).
\end{align}
Integrating this variational formula with the boundary condition
$\|\mathcal S\|^2=\|\mathcal S\|_0^2$ at $\phi=\psi=0$ completes
the proof.
\end{proof}

As we have already pointed out, this is a special case of
a more general anomaly formula
for the Quillen metric, \cite[Theorem 1.23]{BGS}, 
which holds for any K\"ahler manifold, see
also \cite[Proposition 3.8]{Fay}, see Appendix A for more details.
Note also that the formula Eq.\ \eqref{thm21} is exact and
the magnetic field
was not assumed to be large in the derivation of this result.

The following corollary follows from Theorems \ref{prop1}
and \ref{prop11}.
\begin{mycor} The regularized spectral determinant of the Laplacian
\eqref{DeltaS} satisfies the following transformation formula
\begin{align}\label{bv}
\log\,&\frac{\det'\Delta_{\rm L}}{\det'\Delta_{{\rm L}0}}
=\mathcal{F}-\mathcal{F}_0,
\end{align}
where we used the equality
$\det'\Delta_{\rm L}=\det'\Delta_{s}^{-}$ for the Laplacian
in Eq.\ \eqref{DeltaS} and \(\mathcal{F}\)
is given by Eq.\ \eqref{exact1}.\end{mycor}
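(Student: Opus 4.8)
The plan is to read the identity off \eqref{detsec} and then cancel the anomalous contributions using Theorems~\ref{prop1} and~\ref{prop11}. Concretely, I would first rewrite \eqref{detsec}, applied both to the data $(g,h^k)$ and to the reference data $(g_0,h_0^k)$, in the form
\begin{equation}\nonumber
\log\frac{{\det}'\Delta_{\rm L}}{{\det}'\Delta_{{\rm L}0}}
=\log\frac{Z_k}{Z_{k0}}-\log\frac{\|\mathcal S\|^2}{\|\mathcal S\|_0^2}\,,
\end{equation}
noting that the equality $\det'\Delta_{\rm L}=\det'\Delta_s^-$ asserted in the statement is simply the Hermitian equivalence \eqref{barD} of $\bp_L$ and $\bar D$, which identifies the non-zero spectra of $\Delta_{\rm L}$ and $\Delta_s^-$, and that all quantities on the right are basis-independent, cf.\ \eqref{detaa}, \eqref{Z1}.

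Next I would substitute the two structural results. Theorem~\ref{prop1}, in its detailed form \eqref{thm1exp}, gives $\log(Z_k/Z_{k0})=\log(Z_H/Z_{H0})+\mathcal F-\mathcal F_0$ with $\log(Z_H/Z_{H0})$ written explicitly in \eqref{ano}; Theorem~\ref{prop11}, in its detailed form Theorem~\ref{thm2}, gives the \emph{exact} transformation formula \eqref{thm21} for $\log(\|\mathcal S\|^2/\|\mathcal S\|_0^2)$. The point is that the right-hand side of \eqref{thm21} coincides termwise with the right-hand side of \eqref{ano}, so $\log(\|\mathcal S\|^2/\|\mathcal S\|_0^2)=\log(Z_H/Z_{H0})$. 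Inserting both into the displayed equation, the anomalous parts cancel and one is left with $\log({\det}'\Delta_{\rm L}/{\det}'\Delta_{{\rm L}0})=\mathcal F-\mathcal F_0$; plugging in the large-$k$ expansion \eqref{exact1} of $\mathcal F$ produces \eqref{bv}.

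There is essentially no obstacle — the corollary is a two-line consequence of the theorems already proved. The only point requiring care is the bookkeeping between exact and asymptotic statements: the identity ${\det}'\Delta_{\rm L}/{\det}'\Delta_{{\rm L}0}=\mathcal F-\mathcal F_0$ is exact, since both \eqref{thm21} and the definition of $\mathcal F$ via \eqref{thm1exp} are exact, whereas only the closed form \eqref{exact1} of $\mathcal F$ is an asymptotic expansion valid up to $\mathcal O(1/k)$; hence \eqref{bv} inherits that $\mathcal O(1/k)$ remainder. One should also note, for completeness, that the $1/\epsilon$ singularities in the heat-kernel expansions \eqref{heat} used to derive \eqref{thm21} indeed cancel (as they do in the proof of Theorem~\ref{thm2}), so that the Quillen-metric side is genuinely finite and the subtraction above is legitimate.
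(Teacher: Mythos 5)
Your proposal is correct and is exactly the paper's argument: the corollary is stated as an immediate consequence of Theorems \ref{prop1} and \ref{prop11} (i.e.\ of \eqref{detsec}, \eqref{thm1exp} and the exact anomaly formula \eqref{thm21}), with the anomalous parts cancelling termwise just as you describe. Your closing remark on the exact-versus-asymptotic bookkeeping is a fair and harmless elaboration of what the paper leaves implicit.
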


The first term $\int_\Sigma B\log B$ in \eqref{exact1}  agrees
with the result of
Bismut-Vasserot \cite{BV} for the leading asymptotics of
the analytic torsion for higher powers
of line bundles on \kahler manifolds,
see also \cite[Theorem 5.5.8\,]{MM}. The last two terms
in \eqref{exact1}  are
$\mathcal O(1)$ corrections to their result.
We conjecture that the coefficient of $k^0$ in $\mathcal{F}$ from
\eqref{freeen} is the  coefficient of $k^0$ in the expansion for
$\log \det'\Delta_{\rm L}$, which would refine
Bismut-Vasserot's result.

\example On the round sphere with the line bundle
$L^k=\CO(k)$ with the  hermitian metric
\eqref{sphereherm}, the regularized determinant of  Laplacian
is given by
\begin{align}\label{apb5.4}
 \log {\det}'\Delta_{L^k}
 =2\sum_{j=1}^{k} (k-j) \log (j+1)-(k+1)\log(k+1)!
 -4\zeta'(-1)+\frac{(k+1)^2}2,
\end{align}
see \cite[Theorem 18]{Ko95}
\footnote{Cf. also \cite[\S 5.1]{Weng95},  with the caveat that 
the metric \eqref{sphereherm} is used incorrectly 
and the final result at p.\ 352 needs to be corrected.}.
 By the Euler-Maclaurin formula, at large $k$ the equation above
 reduces to
 \begin{align}\label{apb5.4a}
 \log {\det}'\Delta_{L^k}=-\frac k2\log\frac k{2\pi}-\frac23\log k
 +\mathcal O(1),
 \end{align}
 which is in accordance with Eq.\ \eqref{bv} (the scalar curvature of
 round sphere is $R_0=4$), where the exact part is given by
 Eq.\ \eqref{exact1}.

\example On the flat torus of area $2\pi$ with the line bundle
$L^k\otimes L_\varphi$ of degree $k>0$, endowed with 
its canonical translation
invariant curvature metric \eqref{hermtorus}, the regularized
determinant of Laplacian equals
(cf. \cite[Prop. 4.2]{Bo96},
\cite[pp. 4469]{Ber01}),
\begin{align}\label{apb5.5}
\log {\det}'\Delta_{L^k\otimes L_\varphi}
= - \frac{k}{2} \log \frac{k}{2\pi}\,\cdot
\end{align}
This is actually an exact formula and the reminder term $o(k)$
vanishes. Moreover, the result does not depend on the flat line
bundle \eqref{flattorus}. Combining this with Eq.\ \eqref{torusZ}, 
the norm of the section in the Quillen metric \eqref{detsec} reads
\begin{equation}
\|\mathcal S\|^2=\frac{1}{(4\pi{\rm Im}\,\tau)^{k/2}}
\cdot e^{-\pi i\frac{(\varphi-\bar\varphi)^2}{\tau-\bar\tau}}
\end{equation}
In particular, this result implies that the adiabatic curvature 
$\Omega$ and the Chern curvature $\Omega^{\mathcal L}$ 
of the Quillen metric  
\eqref{omega2} coincide and equal to
\begin{equation}\nonumber
\Omega=\Omega^{\mathcal L}
=\frac{2\pi i}{\tau-\bar\tau}id\varphi\wedge d\bar\varphi
-\frac k{2(\tau-\bar\tau)^2}id\tau\wedge d\bar\tau.
\end{equation}
This is exactly the formula derived in Ref.\ \cite{ASZ1}.
The first term here is a $(1,1)$-form on $Jac(\Sigma)$ and describes 
Hall conductance, as we explained in \S \ref{PIntr}, 
and the second term is a $(1,1)$-form on $\mathcal M_1$ related 
to Hall viscosity.

\example On higher genus surfaces the determinant of laplacian
for the metric of constant scalar curvature and canonical line bundle 
has the form, see e.g.\ \cite[Eq.\ 15]{ASZ} and \cite{DP1},
\begin{equation}\nonumber
{\det}'\Delta_{L^k}=e^{-c_k\chi(\Sigma)}
\prod_{\gamma}\prod_{j=1}^\infty
[1-e^{i\sum_{a=1}^{2{\rm g}}\varphi_an_a(\gamma)}
e^{-(j+k)l(\gamma)}].
\end{equation} 
 This is written for 
the surfaces realized as orbit spaces of discrete subgroups
$\Gamma$ of $SL(2,\mathbb R)$ acting on upper half plane. 
Here $\gamma\in\Gamma$ are primitive 
hyperbolic elements of $\Gamma$ representing conjugacy 
classes corresponding to closed geodesics on $\Sigma$,  
$n_a$ counts the number of times the closed geodesic  
goes around the $a$th fundamental loop, and 
$l(\gamma)$ is the length of geodesic. 
Also  $c_k$ is a constant, see \cite{DP1} for details. 
The leading term in the asymptotic of 
${\det}'\Delta_{L^k}$ is given by $e^{-kl(\gamma_{\rm min})}$, 
where $\gamma_{\rm min}$ is the length of shortest geodesics. 
Therefore when the latter is non-vanishing, i.e., away from the
boundary of the moduli space this term represents small fluctuations
and $\Omega^{\mathcal L}$ approximates $\Omega$ 
with exponential precision.

\section{Curvature formula for Quillen metric and adiabatic transport}
\label{sec5}
\subsection{Curvature formula and adiabatic
transport in QHE}\label{CSform}

In this section we discuss the adiabatic curvature of the integer
quantum Hall state and its relation to the Quillen metric. 
Here we consider the family of Riemann surfaces $\Sigma_y$,
parameterized by $y\in Y$. Let $M$ be the space, which
is the union of all $\Sigma_y$ over $Y$ (universal curve),
and denote by $\sigma:M\to Y$ the natural projection.
Consider also the family of line bundles
$L^k_y\otimes K_y^s\to\Sigma_y$, $y\in Y$.
This extends to the holomorphic line bundle $E$ over $M$,
which is the union of all line bundles $L^k_y\otimes K_y^s$.
The Hermitian metric $h^k(z,\bz,y,\bar y)$ straightforwardly 
extends to the Hermitian
metric $h^E$ on $E$ over $M$. 

\begin{mainlem}
The adiabatic connection equals the Chern connection
of the $L^2$-metric on $\mathcal{L}$. Hence,
the adiabatic curvature $\Omega$ is the curvature of 
the Chern connection
of the $L^2$-metric on $\mathcal{L}$.
\end{mainlem}
\begin{proof}
As we explained in \S \ref{PIntr}, in quantum mechanics 
the adiabatic connection is defined as follows \cite{Simon}. 
Let $\Psi(t)$ be a normalized wave function 
$\langle\Psi(t),\Psi(t)\rangle_{L^2}=1$, belonging to some
(multi-dimensional) parameter space $t\in Y$.
Then the adiabatic (Berry) connection is given by the formula 
Eq.\ \eqref{aconn},
$$\mathcal A_t=i\langle\Psi(t),d_t\Psi(t)\rangle_{L^2}.$$ 
In the setup of the present paper we need a slightly 
more general version of this connection, due to the following. 
Our wave function $\mathcal S(y)$ Eq.\ \eqref{Phi1} depends 
on the coordinates $t=(y,\bar y)$ on the parameter space $Y$ 
holomorphically and, in fact, it is a section of the line bundle 
$\mathcal L$ on $Y$ as well as ${\rm L}$ on $\Sigma$
(for each coordinate $z_j$). Therefore the derivative $d_t$ 
in the definition above in our context is replaced by the appropriate 
covariant derivative $\nabla^E_{u^H}$, which is 
the Chern connection on $E$, see Eq.\ \eqref{eq5:}-\eqref{eq8:}.
Introducing the notation 
$\Psi(y,\bar y)=\mathcal S(y)/\|\mathcal S(y)\|_{L^2}$, 
where $\Psi(y,\bar y)$ is the normalized wave function  
$\langle\Psi(y,\bar y),\Psi(y,\bar y)\rangle_{L^2}=1$, 
we shall define the adiabatic connection $\mathcal A_y$
in this general setup by 
\begin{equation}\label{e:adiabcon}
\mathcal A_y(u)=i\langle \Psi (y,\bar y),\nabla^E_{u^H}
\Psi(y,\bar y)\rangle_{L^2}, \quad {\rm for\; any}\;\; u\in 
T^{(1,0)}Y,
\end{equation} 
where $u^H$ is horizontal lift of $u$. 
Now we would like to show that
\begin{equation}\label{e:adiabcon1}
\mathcal A_y=
\frac{i}{2}\p_y\log\|\mathcal S(y)\|^2_{L^2}.
\end{equation} 
We have 
\begin{align}
&\mathcal A_y(u)=i\langle \mathcal S(y)/\|\mathcal S(y)\|_{L^2}, 
\nabla^E_{u^H} 
\bigl(\mathcal S(y)/\|\mathcal S(y)\|_{L^2}\bigr)\rangle_{L^2}\\
&=i\langle \mathcal S(y)/\|\mathcal S(y)\|_{L^2}, 
\bigl(\nabla^E_{u^H} \mathcal S(y)\bigr)/\|\mathcal S(y)\|_{L^2}
\rangle_{L^2}
-\frac i2i_u\p_y\log \|\mathcal S(y)\|^2_{L^2}\\
&=i\frac1{\|\mathcal S(y)\|_{L^2}}\langle \mathcal S(y), 
(\nabla^E_{u^H} \mathcal S(y))
\rangle_{L^2}-\frac i2i_u\p_y\log \|\mathcal S(y)\|^2_{L^2},
\end{align} 
where $i_u$ is contraction with the vector $u$. 
Then the first term in the last line equals
\begin{align}
\langle \mathcal S(y), (\nabla^E_{u^H} \mathcal S(y))\rangle_{L^2}
=i_u\p_y\langle \mathcal S(y), \mathcal S(y)\rangle_{L^2}-\langle 
\nabla^E_{{\bar u}^H} \mathcal S(y), \mathcal S(y)\rangle_{L^2},
\end{align} 
where the last term vanishes because the section is holomorphic. 
Here we crucially used the fact that Lie derivative in direction 
$u^H$ of the volume form $L_{u^H}\omega_{\Sigma_y}=0$, 
as explained in Appendix A, Eq.\ \eqref{lie}. 
Then the statement of Eq.\ \eqref{e:adiabcon1} follows. 
We can also write the connection Eq.\ \eqref{e:adiabcon1}
in the form
\begin{equation}\label{e:adiabcon2}
\mathcal A_y=\frac{i}{2}\p_y\log Z_k.
\end{equation} 
The proof is complete.
\end{proof}
From the Eq.\ \eqref{e:adiabcon2} above we arrive at the formula, 
which relates adiabatic curvature
to the generating functional, 
\begin{equation}\label{adcurv1}
\Omega=d_Y\mathcal A=-\partial_Y \overline\partial_{Y}\log Z_k\,.
\end{equation}
The adiabatic curvature
$\Omega$ is a closed $(1,1)$-form on the parameter space $Y$.
The integrals of $\Omega$ over closed 2-cycles in $Y$ \eqref{YY}
are the adiabatic transport coefficients. 
The crucial physical observation,
that goes back to \cite{T1,T2,AS}, attributes the quantization
of the Hall conductance to the fact that
$\frac i{2\pi}\Omega$ belongs to an integral cohomology class
if $Y=\mathit{Jac}(\Sigma)$.
Another observation due to \cite{ASZ,L1,L2,KW} is that for
$Y=\mathcal M_{\rm g}$
the curvature $\frac i{2\pi}\Omega$ belongs to
a rational cohomology
class and defines geometric transport on the moduli space of 
complex structures.

Note that following formula holds \begin{equation}\label{adcurv}
\Omega=\Omega^{\mathcal L}
-\partial_Y \overline\partial_{Y}\log {\det}'\Delta_{\rm L},
\end{equation}
where $\Omega^{\mathcal L}
=\partial_Y \overline\partial_{Y}\log \|\mathcal S\|^2$
is the curvature of Chern connection of the Quillen metric 
\eqref{detsec} on the determinant line bundle $\mathcal L$.
The last term in \eqref{adcurv}, is in fact an exact form on $Y$,
since ${\det}'\Delta_{\rm L}$ is a function on $Y$.
It vanishes under the integration
over a closed 2-cycle, in which case the only contribution to the
transport coefficients comes from  $\Omega^{\mathcal L}$.

Now we recall the curvature formula,
following \cite{BGS}, see also \cite{BF,BJ}.
We also note that this formula appeared in the QHE context
in Ref.\ \cite{TP06}.  
We denote the curvature of the metric $h^E$ on $E$ over $M$ 
by $\mathrm F^E$. In the same way we can extend the tangent
bundle $T\Sigma$ to the union $TM|Y$ (note that this bundle
is one-dimensional while $TM$ is not) over $M$, and let
$g^{TM|Y}$ be any smooth Hermitian metric on $TM|Y$
(which is automatically K\"ahler along fibers since they are
one-dimensional) with $R_{TM|Y}$ being its curvature 2-form.

Then the following formula for the curvature of
the determinant line bundle $\Omega^{\mathcal L}$
holds \cite[Theorems \,1.9, 1.27]{BGS}\footnote{
In the literature, there exist two different curvature formulas.
In the smooth category, for a family of
Dirac operators, Bismut-Freed defined in \cite[Theorem\,1.21]{BF}
a unitary connection on the smooth determinant line bundle and
computed its curvature.
In the holomorphic category which we use here,
Bismut-Gillet-Soul\'e \cite[Theorem 1.27]{BGS} computed
the curvature of the Chern connection of a holomorphic
determinant line bundle
for a holomorphic locally K\"ahler fibration
\cite[Definition\,1.25]{BGS} with compact fiber.
All curvatures in \eqref{GRR} are curvatures of associated
Chern connections, and
thus the integral of the right hand side of \eqref{GRR} is
a $(1,1)$-form on the base manifold.}
\begin{equation}\label{GRR}
\Omega^{\mathcal L}
=-2\pi i\int_{M|Y}\bigl[{\rm Ch}(E){\rm Td}(TM|Y)\bigr]_{(4)}.
\end{equation}
Here the integrand is a form of mixed degree on $M$.
The subscript $4$ means that only $4$-form component
of the full expression is retained, so that the result of the
integration is a 2-from. The notation $M|Y$ means that
the integration goes over the fibers in the fibration
$\sigma:M\to Y$, i.e., over the spaces $\Sigma_y$ at  $y$ fixed.

In order to apply \cite[Theorem\,1.27]{BGS} we need to check
that our fibration $\sigma: M\to Y$ with Riemann surfaces
is locally K\"ahler. This can be seen as follows.
Let $L$ be a holomorphic line bundle on $M$ which is positive along
each fiber $\Sigma_{y}$ (for example, we can take
$L|_{\Sigma_{y}}= K_{\Sigma_{y}}$
if the genus of  $\Sigma$ satisfies $\mathrm{g}>1$).
By taking a sufficiently large power of $L$, we can assume that
$\deg (L|_{\Sigma_{y}})> 2(1-\mathrm{g})+2$ for every $y\in Y$.
Then by \cite[p.\,215]{GH:78}, the Kodaira map
$\Sigma_{y}\to \mathbb{P}(H^0(\Sigma_{y}, L|_{\Sigma_{y}})^*)$
along the fiber $\Sigma_{y}$
is a holomorphic embedding for each $y\in Y$. These Kodaira maps
vary holomorphically with $y\in Y$ and induce
a holomorphic embedding of $M$ onto the total space of
the holomorphic vector bundle
$\mathbb{P}(H^0(\Sigma, L|_{\Sigma})^*)$ over $Y$,
whose fibers are
$\mathbb{P}(H^0(\Sigma_{y}, L|_{\Sigma_{y}})^*)$.
A smooth metric on the  holomorphic vector bundle
$H^0(\Sigma, L|_{\Sigma})$ on $Y$, it induces a smooth metric
$h^{\mathcal{O}(1)}$ on
the fiberwise hyperplane line bundle $\mathcal{O}(1)$
on $\mathbb{P}(H^0(\Sigma, L|_{\Sigma})^*)$.
Now the restriction of the first Chern form of the Chern connection
of $(\mathcal{O}(1), h^{\mathcal{O}(1)})$
on $M$ gives a closed $(1,1)$-form which is K\"ahler
along each fiber $\Sigma_{y}$. This shows that
$\sigma: M\to Y$ is actually a 
K\"ahler fibration \cite[Definition\,1.4]{BGSII}, 
cf.\ Appendix A.

\subsection{Proof of Theorem \ref{prop2}}
\label{sec52}
Let us write $\Omega^{\mathcal L}$ in Eq.\ \eqref{GRR}
locally as an exterior derivative of a 1-form on
$Y$: $\Omega^{\mathcal L}= d_Y\mathcal A^{\mathcal L}$.
Then we choose an adiabatic process,
i.e., a smooth closed contour $\mathcal C$ in $Y$.
When the system is transported along the contour, the geometric
part of the  adiabatic phase (associated with Quillen metric)
is given by the integral of the connection along $\mathcal C$
\begin{equation}\label{adphase}
\int_{\mathcal C}\mathcal A^{\mathcal L}.
\end{equation}
Now we would like to obtain an explicit formula for this integral.
The goal is to show that it is given by (2+1)d integral of
a specific Chern-Simons form over $\sigma^{-1}(\mathcal{C})$.

Let us focus on the structure of the integrand in \eqref{GRR}.
The Chern character form ${\rm Ch}(E)$
of a vector bundle $E\to M$ and the Todd form ${\rm Td}(TM)$
of the tangent bundle of a manifold
$M$ are defined by the formal expansions in the powers of the
corresponding curvature forms.
Here we will need only a few first terms in these expansions
\begin{align}\nonumber
&{\rm Ch}(E)=1+c_1(E)+\frac12\bigl(c_1^2(E)-2c_2(E)\bigr)
+\ldots,\\
&{\rm Td}(TM)=1+\frac12c_1(TM)+\frac1{12}\bigl(c_1^2(TM)
+c_2(TM)\bigr)+\ldots
\end{align}
where the Chern forms are given by
\begin{align}\nonumber
&c_1(E)=\frac{i}{2\pi}\tr  \mathrm{F}^E,\quad c_2(E)=
\frac1{8\pi^2}\bigl(\tr \mathrm{F}^E\wedge
\mathrm{F}^E-\tr  \mathrm{F}^E\wedge \tr \mathrm{F}^E\bigr),\\
&c_1(TM)=\frac{i}{2\pi}\tr \mathrm{R}^{TM},\quad c_2(TM)
=\frac1{8\pi^2}\bigl(\tr \mathrm{R}^{TM}\wedge \mathrm{R}^{TM}
-\tr \mathrm{R}^{TM}\wedge \tr \mathrm{R}^{TM}\bigr).
\end{align}
In our case the 2-forms $\mathrm{F}^E$ and $\mathrm{R}_{TM|Y}$
are scalar valued,
so the traces shall be omitted, hence $c_2(E)=c_2(TM|Y)=0$.
Also we split the curvature 2-form of the bundle $E$ as:
$\mathrm{F}^{E}=\mathrm{F}-s\mathrm{R}_{TM|Y}$
where $\mathrm{F}$ now refers to the part of the curvature
2-form corresponding to the line bundle
$\tilde L^k\to M$, which is the union of all bundles
$L^k_y\to \Sigma_y$.
Using the composition property of the Chern character forms
${\rm Ch}(E\otimes E')={\rm Ch}(E)\cdot {\rm Ch}(E')$
for any two bundles $E,E'$, we get
\begin{align}
\nonumber
{\rm Ch}(L^k\otimes K^s)
=&1+c_1(L^k)-sc_1(TM|Y)-sc_1(L^k)c_1(TM|Y)
+\frac12\bigl(c_1^2(L^k)+s^2c_1^2(TM|Y)\bigr)+\ldots
\end{align}
Then the formula \eqref{GRR} specified to our case reads
\begin{align}\label{BF}
\Omega^{\mathcal L}=\frac{i}{4\pi}
\int_{M|Y}\left[\mathrm{F}\wedge \mathrm{F}
+(1-2s)\, \mathrm{F}\wedge \mathrm{R}_{TM|Y}
+\left(\frac{(1-2s)^2}4-\frac1{12}\right)\,
\mathrm{R}_{TM|Y}\wedge \mathrm{R}_{TM|Y}\right]&.
\end{align}
The holonomy
    $\exp(- {\int_{\mathcal C}\mathcal A^{\mathcal L}})$
    is the parallel transport operator from the point $\mathcal{C}_{0}$
    into  $\mathcal{C}_{1}(=\mathcal{C}_{0})$
    along the path $\mathcal{C}_{s}$ for the Chern
    connection on $(\mathcal{L}, \|\cdot\|)$. 
    It is a complex number of norm $1$
    and it does not depend on the origin $\mathcal{C}_{0}$.
 If $\mathcal{C}$ is the boundary of a surface $G$,
    then by (\ref{GRR}),
    \begin{align}\label{eq:gCS51}
\exp\Big(- {\int_{\mathcal C}\mathcal A^{\mathcal L}}\Big)
    = \exp\Big(2i\pi\int_{\sigma^{-1}(G)}
 \bigl[{\rm Ch}(E){\rm Td}(TM|Y)\bigr]_{(4)}\Big).
\end{align}   
    
   By \cite[Theorem 1.15]{BGS}, the unitary connection
  in \cite{BF} associated with the horizontal 
  bundle (the orthogonal bundle to the vertical bundle)
  is exactly the Chern connection
  on  $(\mathcal{L}, \|\cdot\|)$, thus by \cite[Theorem 3.16]{BF}, 
 \begin{align}\label{eq:gCS61}
\exp\Big(\!- {\int_{\mathcal C}\mathcal A^{\mathcal L}}\Big)
    = \exp\big(2i\pi \overline{\eta}\big),
\end{align}
where $\overline{\eta}$ is the adiabatic limit of the
reduced  eta invariant for the fibration 
    $\sigma: \sigma^{-1}(\mathcal C)\to \mathcal C$. 
Note that in  \cite[Theorem 3.16]{BF}, the line bundle is the dual
of our $\mathcal{L}$ and the spin structure 
of $\mathcal C$ (identified with $S^1$) is trivial. In our situation
the spin structure of $\mathcal C$ is non-trivial, due to the 
choice of spinor 
$\Lambda (T^{*(0,1)}\sigma^{-1}(\mathcal C))\otimes E$,
with $E=L^k\otimes K^s$, in Section \ref{CSform}.
Thus the factor $(-1)^{{\rm Ind} D_{+}}$ 
from \cite[(3.164)]{BF} does not appear in 
(\ref{eq:gCS61}).
Moreover, to apply \cite[Theorem 1.15]{BGS},
we assume the metric $g$ on $T\Sigma$
is the restriction of a K\"{a}hler metric 
on a neighborhood  of $\sigma^{-1}(\mathcal C)$.
 For a detailed discussion on  
    how to combine Bismut-Gillet-Soul\'{e}'s curvature formula
    and Bismut-Freed's holonomy theorem to study the holonomy
    of the determinant line bundle even for a singular fibration
    see \cite[\S6]{BB}.
    
The eta invariant (form) is closely related to Chern-Simons theory.
 Let $E$ be a complex vector bundle on a manifold $X$.
Let $\nabla ^E$ be a connection on $E$, 
with curvature $\mathrm F^E=(\nabla ^E)^2\in
\Omega^2(X, \End(E))$.
For any real polynomial $q\in\R[z]$, we set
\begin{align}\label{curcm6.4}
 Q\big(\mathrm F^E\big)= \tr\Big[q\big(\tfrac{\sqrt{-1}}{2\pi} 
 \mathrm F^E\big)\Big]
\in\Omega^{\bullet}(X,\C)\,.
\end{align}    
By the Chern-Weil theory \cite[Theorem B.5.1]{MM},
$Q\big(\mathrm F^E\big)$ is a closed 
differential form and its de Rham cohomology class 
$[Q\big(\mathrm F^E\big)]\in H^{2\bullet}(X,\C)$
does not depend on the choice of $\nabla ^{E}$.

The Chern-Simons class 
$\widetilde{Q}(\nabla ^E_0, \nabla ^E_1)
\in \Omega^{2\bullet-1} (X,\C)/d\Omega^{\bullet}(X,\C)$
 associated to  two connections
$\nabla ^E_0$ and $\nabla ^E_1$ on $E$
is well defined (cf. \cite[Theorem B.5.4]{MM}), and
\begin{equation}\label{curcm6.15}
d \widetilde{Q}(\nabla ^E_0, \nabla ^E_1)
= Q(\mathrm F^E_1)-Q(\mathrm F^E_0).
\end{equation}

Let's go back to the context of \eqref{eq:gCS61}.
Let $g_{1}, h_{1}^{L}$ be another couple of metrics on 
$T\Sigma, L$, and 
$\overline{\eta}_{1}$ be the adiabatic limit of the associated 
reduced eta invariant. Then by 
\cite[(3.196)]{BF},  by applying
\cite[(6.33)-(6.36)]{BB} for the fibration 
$\sigma^{-1}(\mathcal C)\times [0,1]\to \mathcal C\times [0,1]$,
we have modulo $\Z$
$$\overline{\eta}-\overline{\eta}_1
= \int_{\sigma^{-1}(\mathcal C)}\text{\rm Chern-Simons classes 
associated with two pairs  $g, h$ 
and $g_{1}, h_{1}^{L}$.}$$
More precisely, let $g_{t}, h_{t}^{L}$ ($t\in [0,1]$)
be a path of metrics from $g, h$  to 
$g_{1}, h_{1}^{L}$. 
Let $A_{t}= \partial \log ((h^k)^{-1}h_{t}^{L^k})$,
and $B_{t}= \partial \log ((g)^{-1}g_{t})$,
and $F_{0}, R_{0}$ the curvatures of the Chern connections
on $(L, h^{k}), (T\Sigma, g)$,
then modulo $2\pi i\Z$,
\begin{equation}\label{curcm6.20}
\begin{split}
2\pi i(\overline{\eta}- \overline{\eta}_{1})
&=-\frac{i}{4\pi}\int\limits_{\sigma^{-1}(\mathcal{C})}\int_{0}^{1} dt 
\Big[2 \Big(\frac{\partial}{\partial t}A_{t}\Big) 
(\mathrm F_{0}+ d A_{t})\\
&\qquad\qquad\qquad\qquad+ (1-2s)\Big(\Big(
\frac{\partial}{\partial t}A_{t}\Big)\wedge (\mathrm R_{0}+ d B_{t})
+(\mathrm F_{0}+ d A_{t})\wedge
\Big(\frac{\partial}{\partial t}B_{t}\Big) \Big)\\ 
&\qquad\qquad\qquad\qquad
+  2 \left(\frac{(1-2s)^2}4-\frac1{12}\right)
\Big(\frac{\partial}{\partial t}B_{t}\Big)
\wedge (\mathrm R_{0}+ d B_{t}) \Big]\\
&=\frac{i}{4\pi}\int\limits_{\sigma^{-1}(\mathcal{C})} A\wedge dA
+\frac{1-2s}2(A\wedge d\omega+dA\wedge\omega)+
\left(\frac{(1-2s)^2}4-\frac1{12}\right)\omega\wedge d\omega\,\\
&\qquad- (A\to A_{1}, \omega\to \omega_{1}).
\end{split}
\end{equation}
This ends the proof of Theorem \ref{prop2}. \hfill $\square$

\smallskip

\begin{rem}
Let us briefly recall the classical Chern-Simons functional.
Let $Z$ be a manifold of dimension $\leq 3$. Then any
principal $G$-bundle of a simply connected compact Lie group $G$ 
over $Z$ is trivializable.  Assume also the structure group of 
$E$ is a simply connected compact Lie group, then $E$
is a trivial vector bundle on $Z$, and once we fix
a trivialization of $E$, it induces a connection 
$\nabla ^E_0$ which is the usual differential $d$. Then we 
denote $\widetilde{Q}(\nabla ^E_0, \nabla ^E_1)$ 
simply by $\widetilde{Q}(\nabla ^E_1)$.
In particular, if $Z$ is an oriented $3$-dimensional manifold,
this implies that the tangent bundle $TZ$ is trivial
and we fix a trivialization of $TZ$. 
The classical Chern-Simons functional 
 associated with the connection  $\nabla^{TZ}=d+A$ is 
 \cite[Example B.5.7]{MM}
 \begin{align}\label{curcm7.36}
CS(\nabla^{TZ})=\int_{Z}\widetilde{p}_{1}(d, \nabla^{TZ})
= \frac{-1}{8\pi^2} \int_{Z}
\tr\!\left[A\wedge dA 
+ \frac{2}{3}A\wedge A\wedge A\right]\in \R,
\end{align}
where $p_{1}$ is the first Pontryagin form.
Once we use another trivialization of $TZ$,
$CS(\nabla^{TZ})$ can change by an integer, thus 
$CS(\nabla^{TZ})\in \R/\Z$ does not depend on the trivialization
of $TZ$ and it is a well-defined functional on the space 
of connections on $TZ$.

However, in the context of our paper,
for $Z=\sigma^{-1}(\mathcal{C})$, the associated Lie group
is $S^1$ which is not simply connected. The line bundles
$L$ and $T\Sigma$ are not trivial on $Z$,
thus the last two terms in \eqref{curcm6.20} are not well-defined
Chern-Simons classes, but their difference is well defined
on  $\sigma^{-1}(\mathcal{C})$.

We show here how to define rigorously
the last two terms in \eqref{curcm6.20}.
By \cite[\S VII, Theorem 2]{Kirby89}, any connected orientable 
compact $3$-dimensional manifold $Z$ bounds a simply connected 
orientable $4$-dimensional manifold $W$. This implies that any 
smooth complex line bundle on $Z$ can be extended 
to a smooth complex line bundle on $W$.

Now let $W$ be such a $4$-dimensional manifold
with boundary the $3$-dimensional manifold 
$Z=\sigma^{-1}(\mathcal{C})$. We extend the line bundles
$L$ and $K_{\Sigma}^{1/2}$ from $\sigma^{-1}(\mathcal{C})$
to line bundles on $W$ denoted by $L$ and $K^{1/2}$, respectively. 
We extend also the connections of $L,K_{\Sigma}^{1/2}\to Z$
to connections of $L,K^{1/2}\to W$ such that
they have a product structure near the boundary $Z$, that is, 
on a neighborhood $Z\times [0, 1]$ of $Z$ in $W$,
they are pull-back of the corresponding connections on $Z$.
We denote the curvatures of $L$ and $K^{1/2}$ 
by $F$ and $-\frac{1}{2}R$, respectively. 
Inspired by \cite[(2.3)]{W05}, we can simply define 
the Chern-Simons functional by
\begin{align}\label{curcm6.37}
CS(A, \omega)= 
-\frac{1}{8\pi^2}\int_{W}\left[\mathrm{F}\wedge \mathrm{F}
+(1-2s)\, \mathrm{F}\wedge \mathrm{R}
+\left(\frac{(1-2s)^2}4-\frac1{12}\right)\,
\mathrm{R}\wedge \mathrm{R}\right].
\end{align}
Then $CS(A, \omega)$ is well-defined modulo $\frac{1}{6}\Z$, 
i.e., it does not depend on the choice of $W$ and the 
extension of the line bundles and connections modulo 
$\frac{1}{6}\Z$.  Let $(W', L, K^{1/2})$ be another 
triple of extensions (with connections). Then 
 $W\cup (-W')$ is a closed manifold (where $-W'$ has
  the same underlying space
 as $W'$ but the opposite orientation) and 
 $L, K^{1/2}$ glue together to complex  line bundles on
 $W\cup (-W')$
 with induced smooth connections and curvatures 
 $F$ and $-\frac{1}{2}R$, thus  
 \begin{align}\label{curcm6.40}\begin{split}
&-\frac{1}{4\pi^2}\int_{W\cup (-W')} 
 \mathrm{F}\wedge \mathrm{F} = 
 \int_{W\cup (-W')}  c_{1}(L)^2\in \Z,\\
& -\frac{1}{8\pi^2}\int_{W\cup (-W')} 
 \mathrm{F}\wedge \mathrm{R} = 
 \int_{W\cup (-W')}  c_{1}(L) c_{1}(K^{1/2})\in \Z,\\
& -\frac{1}{16\pi^2}\int_{W\cup (-W')} 
 \mathrm{R}\wedge \mathrm{R} = 
 \int_{W\cup (-W')}  c_{1}(K^{1/2})^2\in \Z.
\end{split}\end{align}
Thus for $s\in \frac{1}{2}\Z$, $CS(A, \omega)$
is well-defined modulo $\frac{1}{6}\Z$.

Note that by removing the term $-\frac{1}{12}R\wedge R$
from the right-hand side of \eqref{curcm6.37} the resulting 
expression is well defined modulo $\frac{1}{2}\Z$.

Note also that if $W$ and $W'$ are fibrations of Riemann 
surfaces over surfaces $G,G'$ with boundary $\mathcal C$, 
we can choose $(K^{1/2})^2$ as the cotangent bundle
of the Riemann surfaces. Then we can apply the 
Atiyah-Singer family index theorem for the fibration 
$\sigma: W\cup (-W') \to G\cup G'$,
and we have 
 \begin{align}\label{curcm6.37-1}
\frac{1}{8\pi^2}\int_{W\cup (-W')}
\left[\mathrm{F}\wedge \mathrm{F}
+(1-2s)\, \mathrm{F}\wedge \mathrm{R}
+\left(\frac{(1-2s)^2}4-\frac1{12}\right)\,
\mathrm{R}\wedge \mathrm{R}\right]\in \Z.
\end{align}
We can reformulate \eqref{curcm6.20} rigorously as
\begin{equation}\label{curcm6.38}
   \overline{\eta}- \overline{\eta}_{1}
=CS(A, \omega) -CS(A_{1}, \omega_{1})\:\:
\text{modulo $\frac{1}{6}\Z$}.
\end{equation}
Finally, we shall also add that it would be interesting to understand whether we have individually 
in \eqref{curcm6.38},
\begin{equation}\label{curcm6.381}
\overline{\eta} =CS(A, \omega)\:\:\text{modulo $\frac{1}{6}\Z$}.
\end{equation}
\end{rem}

In summary, we have derived the adiabatic phase starting
from the 2d QH-state and its generating functional, and using
the Bismut-Gillet-Soul\'e curvature formula for Quillen metric.
This observation establishes the link between 2d
\cite{K,CLW,FK,CLW1,KW}
and (2+1)d \cite{AG1,AG2,AG4,WZ,FS} approaches to QHE.

\section*{Appendix A: Quillen metric and analytic torsion}
Here we will review the basic facts on the Quillen metric
cf. \cite[\S 5.5]{MM}. Let $X$ be a compact complex manifold
and let $E$ be a holomorphic vector bundle on $X$.
Let $\Omega^{(0,\bullet)}(X, E)= \bigoplus_j \Omega^{(0,j)}(X, E)$
be the space of smooth anti-holomorphic forms on $X$
with values in $E$.
The holomorphic structure on $E$ allows us to define the
Cauchy-Riemann operators $\overline{\partial}^E$
as follows.
Any section $s\in \cC^\infty(X,E)$ has the local form
$s=\sum_{l} \varphi_{l}\xi_{l}$ where $\{\xi_{l}\}_{l=1}^m$
is a local holomorphic frame of $E$ and $\varphi_{l}$
are smooth functions. We set
$\bp^E s=\sum_{l}(\bp \varphi_{l})\,\xi_{l}$,
where $\bp \varphi_{l}= \sum_j d{\overline z}_j
\frac{\partial }{\partial{\overline z}_j}\varphi_{l}$
in holomorphic coordinates $(z_1,\cdots,z_n)$.
We define the $\overline{\partial}^E$-operator
on $\Omega^{(0,\bullet)}(X, E)$ by setting
 $\bp^E (\alpha\otimes s)= \bp\alpha \otimes s
 + (-1)^{\deg \alpha}  \alpha \wedge \bp^E s$,
 where $\alpha\in\Omega^{(0,\bullet)}(U)$
 and $s\in \cC^\infty(U,E)$ for some open set $U\subset X$.
We have then $( \bp^E)^2=0$. Thus we get thus
the Dolbeault complex
 $(\Omega^{(0,\bullet)}(X, E), \bp^E)$ whose cohomology
is called Dolbeault cohomology,
 $$H^{(0,\bullet)}(X, E) = \bigoplus_{j=0}^n H^{(0,j)}(X, E)
\quad {\rm with  }\, \,  H^{(0,j)}(X, E)
= \ker \bp^E|_{\Omega^{(0,j)}}/
 {\rm Im}  \bp^E|_{\Omega^{(0,j-1)}}. $$
We denote it simply
by $H^{\bullet}(X, E) := \bigoplus_j H^{j}(X, E)
:=\bigoplus_{j=0}^n H^{(0,j)}(X, E)$.

Let $h^{TX}$ be a Hermitian metric on the holomorphic
tangent bundle $TX$ and $h^E$ a Hermitian metric on $E$.
They induce an $L^2$-inner product $\langle\cdot\,,\cdot\rangle$
on $\Omega^{(0,\bullet)}(X, E)$ as in \eqref{inner}.
Let $\bp^{E*}$ be the adjoint of $ \bp^E$ with respect to
this $L^2$-inner product.
The Kodaira Laplacian $\square^E$ is defined by
$$\square^E=(\bp^{E}+ \bp^{E*})^2
= \bp^{E*}\bp^{E}+ \bp^{E}\bp^{E*}:
\Omega^{(0,j)}(X, E)\to \Omega^{(0,j)}(X, E).$$
By Hodge theory, the map
$\ker \square^E|_{\Omega^{(0,j)}}\to H^{j}(X, E)$,
$\sigma\mapsto[\sigma]$,
which sends a harmonic form to its Dolbeault cohomology class is
an isomorphism
$\ker \square^E|_{\Omega^{(0,j)}}\simeq H^{j}(X, E)$.
We denote  by $\square^{E, >0}$ the restriction of
$\square^E$ to the orthogonal complement of $\ker \square^E$.
Then for $u \in \C$, ${\rm Re}(u)> n$,
 the operator $(\square^{E, >0})^{-u}$
is a trace class operator and the theta function
\begin{equation}\label{eq2.}
\theta(u)= -
\sum_{j=0}^n (-1)^j j\,  \tr|_{\Omega^{(0,j)}}
\Big[ (\square^{E, >0})^{-u} \Big]
\end{equation}
extends as a meromorphic function on $\C$ which is
holomorphic at $0$. This is a simple application
of the small time asymptotic expansion of the heat
kernel and the Mellin  transformation.
The Ray-Singer analytic torsion is defined as
\begin{equation}\label{antor}
T =  \exp\left(-\frac{1}{2} \frac{\partial\theta}{\partial u}(0)\right).
\end{equation}
The determinant line of the cohomology $H^{\bullet}(X,E)$
is the complex line given by
$$\det H^{\bullet}(X,E)
= \bigotimes_{j=0}^n (\det H^j (X, E))^{(-1)^j},$$
where for $F$ a complex vector space,
$\det F= \Lambda^{\max} F$ is the complex line given by the
exterior algebra of maximum degree of $F$ and
$(\det F)^{-1}$ the dual line of $\det F$. We define
\begin{align}\label{apb5.1}
\lambda(E)= (\det H^{\bullet}(X,E))^{-1}.
\end{align}
Let $h^{H(X,E)}$ be the $L^2$-metric on $H^{\bullet}(X,E)$
induced by the $L^2$-scalar product on $\Omega^{0,\bullet}(X,E)$.
via the Hodge theory. 
Let $|\cdot|_{\lambda(E)}$ be the $L^2$-metric on $\lambda(E)$
induced by $h^{H(X,E)}$.

\begin{maindefn} \label{apbt5.3} The Quillen metric
$\|\cdot\|_{\lambda(E)}$ on the complex line $\lambda(E)$
is defined by
\begin{align}\label{apb5.2}
\|\cdot\|_{\lambda(E)}= |\cdot|_{\lambda(E)}
\exp\Big(\!-\frac{1}{2}\theta^\prime(0)\Big).
\end{align}
\end{maindefn}

The Quillen metric $\|\cdot\|_{\lambda(E)}$ depends on
the choice of the metrics $h^{TX}$ and $h^E$.
If $h^{TX}_0, h_1^{TX}$ are two K\"ahler metrics on $TX$,
and $h^E_0, h_1^E$ are two Hermitian metrics on $E$,
let $\|\cdot\|_{\lambda(E), i}$ be the Quillen metric on $\lambda(E)$
associated with $(h^{TX}_i, h^E_i)$ for $i=0,1$.
Bismut-Gillet-Soul\'{e} \cite[Theorem 1.23]{BGS}
proved the anomaly formula, by expressing
$$\log\frac{\|\cdot\|_{\lambda(E), 0}^2}
{\|\cdot\|_{\lambda(E), 1}^2}$$
as an integral of certain explicit computable local terms,
the so-called Bott-Chern secondary classes. In particular,
Theorem \ref{prop11} in Section \ref{sec4} is a special case of
their result.

Let $\pi: M\to B$ be a holomorphic submersion of
complex manifolds with compact fiber $X$,
and let $E$ be a holomorphic vector bundle on $M$.
Usually,  the dimension of the fiberwise Dolbeault cohomology
$H^j(X_b, E|_{X_b})$ can jump dramatically for $b\in B$,
thus  it is not clear whether the family of complex lines
$\lambda(E)_b$ parameterized by $b\in B$ can form a
smooth line bundle on $B$.

Following a suggestion of Grothendieck,
Knudsen-Mumford \cite{KnuM76}
solved the problem in 1976 for a projective map $\pi: M\to B$
(which means that there exists a holomorphic
vector bundle $F$ over $B$ and a closed immersion
 $\imath: M\to \mathbb{P}(F)$ such that $\pi= p\circ \imath$,
 where $p: \mathbb{P}(F)\to B$ is the projectivisation of
 $F\to B$).
They defined namely in this case a holomorphic line bundle
$\lambda^{KM}$ over $B$ such that on $b\in B$,
we have canonically $\lambda^{KM}_b\simeq \lambda(E)_b$.

By extending \cite{Q}, Bismut-Freed \cite{BF} shown that
we can define analytically  a smooth structure of $\lambda(E)$
over $B$.

Assume now $\pi: M\to B$ is locally K\"ahler, i.e.,
if there exist an open covering $\mathcal{U}$ of $B$ such that
for any $U\in \mathcal{U}$,
there exists a K\"{a}hler metric on $\pi^{-1}(U)$.
Bismut-Gillet-Soul\'{e} \cite[\S 1 i)]{BGS} defined
a holomorphic structure on $\lambda(E)$ over $B$
such that  $\lambda(E)$ is a holomorphic line bundle on $B$.
Let $TX$ be the relative holomorphic tangent bundle
of $\pi: M\to B$ and let $h^{TX}$ be a smooth Hermitian metric
on $TX$ which induces a K\"ahler metric on each fiber $X$.
 Let $h^E$ be a Hermitian metric on
 $E$. Since the fiberwise cohomology group
 $H^j(X_b, E|_{X_b})$  might jump, both the $L^2$-metric
 $|\cdot|_{\lambda(E)_b}$ on the line bundle $\lambda(E)$
 and the analytic torsion $T_b$ need not be smooth over $B$.
 However, Bismut-Gillet-Soul\'{e}
 proved that the Quillen metric $\|\cdot\|_{\lambda(E)_b}$
forms a smooth metric on the line bundle
 $\lambda(E)$ over $B$, and in \cite[Theorems 1.9, 1.27]{BGS},
 they computed the curvature of the Chern connection
 of the holomorphic Hermitian line bundle
 $(\lambda(E), \|\cdot\|_{\lambda(E)})$ over $B$
by using the local family
Atiyah-Singer index theorem of Bismut \cite{B}.
If $\pi: M\to B$ is projective, they also proved that
there is a canonical isomorphism
$\lambda^{KM}\simeq\lambda(E)$ as holomorphic line bundles
on $B$.

Let $\omega^M$ be a closed real $(1,1)$-form on $M$ such that
its restriction along each fiber $X$ defines
a hermitian metric $h^{TX}$
(In particular,  $\pi:M\to B$ is locally K\"{a}hler).
We assume that for each $j$, the dimension of
$H^j(X_{b},E_{X_{b}})$ is constant on $b\in B$. Then
$H^j(X_{b},E_{X_{b}})$ coincides with the fiber of the $j$-th
direct image $R^j\pi_{*}E$ of the sheaf of holomorphic sections of
$E$ via $\pi$, thus from the holomorphic structure on
 $R\pi_{*}E$, they form a holomorphic vector bundle
 $H^j(X,E)$ on $B$.

 Let $\nabla^{H(X,E)}$ be the Chern connection on the holomorphic
 Hermitian vector bundle on $(H^\bullet(X,E)$,
 $h^{H(X,E)})$ over $B$, cf. 
 \cite[Theorem 1.1.5]{MM}.
 A natural question is how to compute
  $\nabla^{H(X,E)}$ by using the
  Chern connection $\nabla^{E}$ on $(E,h^{E})$.
  Observe that for any $s\in \mathcal{C}^\infty(B, H^j(X,E))$,
  we can identify it as an element of
 $ \mathcal{C}^\infty(M, \Lambda^{j}(T^{*(0,1)}X)\otimes E)$
 via the Hodge theory.

 Let $P_{b}: \Omega^{(0,\bullet)}(X_{b}, E)\to \ker \square^{E}$
 be the fiberwise orthogonal projection for $b\in B$.
 Let $\nabla^{\Lambda(T^{*(0,1)}X)\otimes E}$
 be the connection on $\Lambda(T^{*(0,1)}X)\otimes E$ induced
 by the Chern connections on $(TX, h^{TX})$,
 $(E, h^{E})$. Let
 \begin{align}\label{eq5:}
T^H_{\R}M:= \{ V\in T_{\R}M:
\omega^M(V,Y)=0 \, \, \text{ for  all } X\in T_{\R}X\}.
\end{align}
Here we add a subscript $\R$ to indicate the underlying real
vector bundles. Then $T_{\R}M= T^H_{\R}M\oplus T_{\R}X$
and for $U\in T_{\R, b}B$, we denote by $U^H\in T^H_{\R}M$
the unique lift such that $d\pi (U^H)=U$. If $u\in T^{(1,0)}B$ 
then $u^H\in T^{(1,0)}M$. Then
Bismut-K\"{o}hler \cite[Theorem 3.5]{BKo} proved that
for any $s\in \mathcal{C}^\infty(B, H^\bullet(X,E))$,
\begin{align}\label{eq6:}
\nabla^{H(X,E)}_{U}s =
P\nabla^{\Lambda(T^{*(0,1)}X)\otimes E}_{U^H} s.
\end{align}
In particular for $s\in \mathcal{C}^\infty(B, H^0(X,E))$, we have
\begin{align}\label{eq8:}
\nabla^{H^0(X,E)}_{U}s = P\nabla^{ E}_{U^H} s.
\end{align}
To verify directly that the right hand side
of (\ref{eq6:}) is a Hermitian connection,
we need to verify that for
$s_{1}, s_{2}\in \mathcal{C}^\infty(B, H^\bullet(X,E))$
\begin{align}\label{eq7:}
U\left\langle  s_{1}, s_{2}\right\rangle=
\left\langle  P\nabla^{\Lambda(T^{*(0,1)}X)\otimes E}_{U^H}s_{1},
s_{2}\right\rangle +         \left\langle  s_{1},
P\nabla^{\Lambda(T^{*(0,1)}X)\otimes E}_{U^H}s_{2}\right\rangle,
\end{align}
but this is a consequence of
the horizontal Lie derivative of the fiberwise volume form
$(\omega^M|_{X}) ^n/n!$ is zero, i.e.,
\begin{equation}\label{lie}
L_{U^H}(\omega^M|_{X}) ^n=0
\end{equation}
which is implied by the closeness of $\omega^M$
cf. \cite[Theorem 1.7, part d]{BGSII}.

\section*{Appendix B: Expansion of the Bergman kernel on
Riemann surfaces}
Here we review the asymptotic expansion of the Bergman kernel,
following Ref.\ \cite{MM}.

Let $(X, J)$ be a compact complex manifold of
$\dim_{\C}X=n$. Let $\omega$ be a positive $(1,1)$-form on $X$,
 and we denote by $g^{TX}$
the Riemannian metric associated to $\omega$.
Then $dv_X= \om^n/n!$ is the Riemannian volume
form of $(X, \om)$.
Let $\nabla^{TX}$ be the Levi-Civita connection on $(X,\omega)$.
We denote by
$R^{TX}=(\nabla^{TX})^2$ the curvature, by $\ric$
the Ricci curvature and
by $R$ the scalar curvature of $\nabla^{TX}$.
Let $\ric_\om= \ric(J\cdot,\cdot)$ be the $(1,1)$-form
associated to $\ric$.
Let $\Delta$ be the (negative) Laplace operator on $(X, g^{TX})$
acting on the functions on $X$.

Let $(L, h^L)$ and $(E, h^E)$ be  holomorphic Hermitian vector
bundles on $X$, with $\operatorname{rk}L=1$.
On the space $\cC^\infty (X,L^k\otimes E)$ of smooth sections of
$L^k\otimes E$ we introduce an
$L^2$--scalar product $\langle\,\cdot\,,\cdot\,\rangle$ given by
\begin{align}\label{n2}
&\langle \sigma_1,\sigma_2 \rangle
:=\int_X\langle \sigma_1(x),\sigma_2(x)
\rangle_{h^{L^k}\otimes h^E}\,dv_{X}(x)\,.
\end{align}
We denote by $L^2(X,L^k\otimes E)$ the completion of
$\cC^\infty (X,L^k\otimes E)$
with respect to \eqref{n2}. We also introduce the space
$H^0(X,L^k\otimes E)$
of holomorphic sections of $L^k\otimes E$ on $X$.
Let
\begin{equation}\label{n2,1}
P_k:L^2(X,L^k\otimes E)\longrightarrow H^0(X,L^k\otimes E)
\end{equation}
be the orthogonal projection, called Bergman projection.
For $z,z'\in X$ let $P_k(z,z')$  be the smooth kernel of
$P_k$ with respect to $dv_{X} (x')$, that is,
\begin{equation}\label{n3}
(P_ks)(z)=\int_X P_k(z,z')s(z')dv_{X} (z')\,,\quad
s\in L^2(X,L^k\otimes E).
\end{equation}
Note that the $L^2$--scalar product
$\langle\,\cdot\,,\cdot\,\rangle$
and hence the kernel $P_k(\,\cdot\,,\cdot\,)$ depend on the
Hermitian metrics $h^L$, $h^E$ and $\omega$.
Let $\{s_{j}\}_{j=1}^{d_{p}}$ be an orthonormal basis of
$\big(H^0(X,L^k\otimes E),\langle\,\cdot\,,\cdot\,\rangle\big)$.
If $E$ is a line bundle,
 \begin{equation}\label{1d0}
P_k(x,x)= \sum_{k=1}^{d_{p}}\big|s_{k}(x)\big|^2_{h^{L^k}
\otimes h^E}\,.
\end{equation}

For $\psi\in\cC^\infty(X)$ and denote
\begin{equation} \label{n4}
h^L_\psi:=h^L e^{-\psi}.
\end{equation}
We assume that the curvature form of $(L,h^L_\psi)$, 
cf. \cite[Theorem 1.1.5]{MM}
\begin{equation} \label{n5}
\omega_\psi:=\frac{i}{2 \pi} R^L_\psi
:=\frac{i}{2 \pi} R^L
+\frac{i}{2 \pi}\partial\overline\partial\psi
\end{equation}
is a K\"ahler metric on $X$.

\noindent
We will associate to a $(1,1)$-form $\eta$ an endomorphism
\begin{equation} \label{n6}
\dot{\eta}\in\End(T^{(1,0)}X)\,,\:\:\eta(u,\overline{v})=
\langle\dot{\eta}(u),v\rangle_{\om}.
\end{equation}
Note that the volume forms corresponding to $\omega$
and $\omega_\psi$ are linked by
\begin{equation} \label{n7a}
\omega_\psi^n/n!=(\det\dot{\omega}_\psi)\omega^n/n!\,.
\end{equation}
We consider the $L^2$-scalar product
$\langle\,\cdot\,,\cdot\,\rangle$ on $\cC^\infty(X,L^k\otimes E)$
constructed as in \eqref{n2} but with the metrics $h^L_\psi$ on
$L$, $h^E$ on $E$ and with respect to the volume form
$\omega^n/n!$ on $X$. We have then a corresponding
Bergman projection $P_k$ as in \eqref{n2,1}.
We denote by $P_k(x,x')$ the Schwartz kernel of $P_k$
with respect to the volume form $\omega^n/n!$ (see \eqref{n3}).

We consider moreover the $L^2$--scalar product
$\langle\,\cdot\,,\cdot\,\rangle_{\psi}$
constructed as in \eqref{n2} with the metrics $h^L_\psi$ on
$L$, $h^E_\psi=(\det\dot{\omega}_\psi)^{-1}h^E$ on $E$ and
with respect to the volume form
$\omega_\psi^n/n!$ on $X$.
Note that $\langle\,\cdot\,,\cdot\,\rangle_\psi
=\langle\,\cdot\,,\cdot\,\rangle$ so that
the Bergman projection $P_k$ (cf.\ \eqref{n2,1}) is the same for
these scalar products. We denote $P_{k,\psi}(x,x')$
the Schwartz kernel by of
$P_k$ with respect to $\omega_\psi^n/n!$.
By \eqref{n7a}, we have
(cf. \cite[(4.1.114)]{MM} or \cite[Remark 0.5]{MM1})
\begin{equation} \label{n7}
P_k(x,x')=(\det\dot{\omega}_\psi)(x')P_{k,\psi}(x,x').
\end{equation}
The advantage of passing from $P_k$ to $P_{k,\psi}$
is that $(L,h^L_\psi)$ polarizes $(X,\omega_\psi)$,
that is, $\frac{i}{2\pi}R^{(L,h^L_\psi)}=\omega_\psi$,
cf.\ \eqref{n5}.
We will use now the asymptotic expansion of
the Bergman kernel function $P_{k,\psi}(x,x)$.
For this purpose we introduce more notations.

In the sequel we will mainly work with the K\"ahler form
$\om_\psi$, in which case we will use
a subscript 
 $\psi$, e.\,g., $\ric_\psi$, $\Delta_\psi$ etc.
Then for any $f\in \cC^{\infty}(X)$ we have
$\left\langle  \ov{\partial}\partial f, \om_{\psi}\right\rangle_{\psi}
= \frac{i}{2}\Delta_{\psi} f$.
Set
\begin{align} \label{bk2.5}
R^E_\psi=R^{(E,h^E_\psi)},\quad R^E_{\psi, \Lambda}
=\left \langle R^E_\psi, \om_\psi\right \rangle_{\psi}\, .
\end{align}
By \cite[Theorem 0.1]{MM1}
 (cf. \cite{C}, \cite{Z}, \cite{Lu}  without the twist bundle $E$),
we have the following asymptotics as $k\to\infty$,
\begin{equation} \label{n8}
P_{k,\psi}(x,x)=k^n+\left(\frac{R_\psi}{8\pi}+
\frac{i}{2\pi}R^E_{\psi,\Lambda}
\right)k^{n-1}
+\frac{1}{\pi^2}\big(\bb_{2\C\psi}+\bb_{2E\psi}\big)k^{n-2}
+\mathcal{O}(k^{n-3}),
\end{equation}
where
\begin{equation}\label{abk2.6}\begin{split}
\bb_{2\C\psi}= &  \frac{\Delta_\psi R_\psi}{48}
+ \frac{1}{96}|R^{TX}_\psi|^2 _{\psi}
- \frac{1}{24} |\ric_\psi|^2_{\psi} +  \frac{1}{128} R^2_\psi\\
=&  \frac{\Delta_\psi R_\psi}{48}
+ \frac{1}{6} R_{\psi,j\ov{\ell} m\ov{q}}  R_{\psi,\ell\ov{j}q\ov{m}}
- \frac{2}{3} R_{\psi,\ell\ov{\ell} m\ov{q}}  R_{\psi,j\ov{j}q\ov{m}}
+  \frac{1}{2} R_{\psi,\ell\ov{\ell}q\ov{q}} R_{\psi,j\ov{j}m\ov{m}},
\end{split}\end{equation}
and
\begin{multline}\label{abk2.7}
\bb_{2E\psi}
=\frac{i}{16} \Big( R_\psi R^E_{\psi,\Lambda}
    - 2 \langle\ric_\psi, R^ E_\psi\rangle_{\psi}
    +\Delta^E_\psi  R^E_{\psi,\Lambda}\Big)
    - \frac{1}{8} (R^E_{\psi,\Lambda})^2 +
     \frac{1}{8}\langle R^ E_\psi, R^ E_\psi\rangle_{\psi}\\
     = R^E_{\psi,q\ov{q}}  R_{\psi,j\ov{j}m\ov{m}}
- R^E_{\psi,m\ov{q}}  R_{\psi,j\ov{j}q\ov{m}}
+ \frac{1}{2}\left(R^E_{\psi,q\ov{q}} R^E_{\psi,m\ov{m}}
- R^E_{\psi,m\ov{q}} R^E_{\psi,q\ov{m}}\right)
+\frac{1}{2}R^E_{\psi,j\ov{j}\,;\,m\ov{m}}.
\end{multline}
Here in normal coordinate associated with $\omega_{\psi}$
at $x_{0}$,
\begin{equation} \label{lm01.4} \begin{split}
   & \om_{\psi}= \frac{i}{2}
    \sum_{j}dz_{j}\wedge d \ov{z}_{j},\\
&R_{\psi,j\ov{m}\ell\ov{q}} =\left \langle  R^{TX}_{\psi}
\Big(\frac{\partial}{\partial z_j},
\frac{\partial}{\partial \ov{z}_m}\Big) \frac{\partial}{\partial z_\ell},
\frac{\partial}{\partial  \ov{z}_q}\right\rangle_{\psi,x_0},\quad
R^E_{\psi,j\ov{\ell}}= R^E_{\psi,x_0}
\Big(\frac{\partial}{\partial z_j},
\frac{\partial}{\partial \ov{z}_\ell}\Big),\\
&R^E_{\psi, j\ov{q}; t\ov{s}}=
\frac{\partial^2}{\partial z_{t}\partial \ov{z}_{s}}R^E_{\psi}
\Big(\frac{\partial}{\partial z_j},
\frac{\partial}{\partial \ov{z}_q}\Big), \quad
i  R^E_{\psi,\Lambda} = 2 R^E_{\psi,j\ov{j}}.
\end{split}\end{equation}

From now on, we assume that $X=\Sigma$ is a (connected)
Riemann surface with the K\"{a}hler form $\om$.
We use the local normal
coordinate associated with $\om$ near $x_{0}$, then at $x_{0}$,
\begin{align}\label{eq:mm3.1}
\om= \frac{i}{2}dz\wedge d \overline{z},
\quad \Delta
= 4 \frac{\partial^2}{\partial z\partial \overline{z}}\,,
\quad (\Delta \varphi)\om
= - 2 i\,\overline{\partial} \partial \varphi.
\end{align}
At $x_{0}$, the scalar curvature $R$
of $(X, \om)$ is given by
\begin{align}\label{eq:mm3.3}
R= 4 R^{T^{(1,0)}X}\Big(
\frac{\partial}{\partial z}, \frac{\partial}{\partial \overline{z}}\Big).
\end{align}
Thus
\begin{align}\label{eq:mm3.2}
-\frac{i}{2} R \, \om= R^{T^{(1,0)}X}
= -R^{K_{X}}
=\overline{\partial} \partial \log |\sigma|^2,
\end{align}
where $\sigma$ is a local holomorphic frame of $T^{(1,0)}X$.

For the K\"ahler form  $\om_{\psi}$ in (\ref{n5}),
by  (\ref{eq:mm3.1}),  we have
\begin{align}\label{eq:mm3.8}
\Delta_{\psi}= a^{-1}\Delta,
\quad \text{with } a= \frac{\om_{\psi}}{\om}.
\end{align}

Thus by (\ref{eq:mm3.1}), (\ref{eq:mm3.2}) and \eqref{eq:mm3.8}
we get
$-\frac{i}{2} R_{\psi}\om_{\psi}
= -\frac{i}{2}  R\om+ \overline{\partial} \partial \log a$,
and
\begin{align}\label{eq:mm3.9}
R_{\psi} =  a^{-1}  R - \frac{1}{a}  \Delta \log a.
\end{align}
Observe that $\deg K_X$ is even, thus $K_X^{1/2}$
is well-defined. Now, for $s\in \frac{1}{2}\Z$,
we take $E= K_{X}^s$ with metric
$g^{\otimes s}$ induced by $\om$,
then $h^{E}_{\psi}= a^{-1} g^{\otimes s}$ and
\begin{align}\label{eq:mm3.14}
R^{E}_{\psi,\Lambda}\om_{\psi}= R^{E}_{\psi}=R^{K_{X}^s}_{\psi}
= s \, R^{K_{X}} - \overline{\partial} \partial \log a .
\end{align}
In particular,  by (\ref{eq:mm3.1}), (\ref{eq:mm3.2})
and (\ref{eq:mm3.14}), we get
\begin{align}\label{eq:mm3.15}
i a R^{E}_{\psi,\Lambda}= -\frac{s}{2} R
+ \frac{1}{2}\Delta \log a
\end{align}
By \eqref{n8}, $P_{k,\psi}(x,x')$ has an asymptotic expansion with
coefficients
\begin{align}\label{eq:mm3.17}
\frac{R_\psi}{8\pi}+
\frac{i}{2\pi}R^E_{\psi,\Lambda}
= \frac{1}{2\pi}\left(\frac{1}{4}  R_{\psi} - \frac{s}{2a}R
+ \frac{1}{2a}\Delta \log a \right)
= \frac{1}{2\pi}\left(-\frac{1}{4}  R_{\psi}
+ \frac{-s+1}{2 a}R\right),
\end{align}
and
\begin{equation}\label{eq:mm3.12}
\begin{split}
\bb_{2\mathbb{C}\psi}&+ \bb_{2E\psi}=
\frac{\Delta_\psi R_\psi}{48}
+ \frac{i}{16}\Delta_\psi  R^E_{\psi, \Lambda}\\
&= \frac{1}{24 a} \Delta \left(\frac{1}{2 a} R
- \frac{1}{2 a}  \Delta \log a\right)
+ \frac{1}{16 a }  \Delta\left ( -\frac{s}{2 a}R
+ \frac{1}{2a}\Delta\log a\right)\\
&= \frac{1}{48 a}\Delta \left( \frac{-3s+2}{2 a}R
+ \frac{1}{2a}\Delta \log a\right).
\end{split}
\end{equation}
Combining \eqref{n7}, \eqref{n8}, \eqref{eq:mm3.8},
 \eqref{eq:mm3.17} and \eqref{eq:mm3.12} we obtain as
 $k\to\infty$,
\begin{equation}\label{eq:mm3.11}
\begin{split}
P_k(x,x)=&ak+\frac{a}{2\pi}\left(-\frac{1}{4}  R_{\psi}
+ \frac{-s+1}{2 a}R\right)\\
&+ \frac{1}{48\pi^2}\Delta\left( \frac{-3s+2}{2 a}R
+  \frac{1}{2a}\Delta \log a\right)k^{-1}+\mathcal{O}(k^{-2}).
\end{split}
\end{equation}
Setting now $\omega_\psi=a\omega$, $B=2\pi ka$ and
$m=s$, we arrive at the expression in Eq.\ \eqref{expansion}.

\end{document}